\definecolor{darkred}{rgb}{0.5,0,0}
\definecolor{darkgreen}{rgb}{0,0.5,0}
\definecolor{darkblue}{rgb}{0,0,0.5}
\numberwithin{equation}{section}
\DeclareMathOperator{\supp}{supp}
\DeclareMathOperator{\dens}{dens}
\newcommand{\RR}{{\mathbb R}}
\newcommand{\QQ}{{\mathbb Q}}
\newcommand{\ZZ}{{\mathbb Z}}
\newcommand{\TT}{\mathbb T}
\newcommand{\NN}{\mathbb N}
\newcommand{\KK}{\mathbb K}
\newcommand{\XX}{\mathbb X}
\newcommand{\DD}{\mathbb D}
\newcommand{\cL}{{\mathcal L}}
\renewcommand{\hat}{\widehat}
\newcommand{\Qspan}{\mathrm{Span}_{\QQ}}
\newcommand{\Zspan}{\mathrm{Span}_{\ZZ}}
\providecommand{\norm}[2][]{\lVert#2\rVert\ifthenelse{\equal{}{#1}}{}{_{#1}}}
\providecommand{\bignorm}[2][]{\bigl\lVert#2\bigr\rVert\ifthenelse{\equal{}{#1}}{}{_{#1}}}
\providecommand{\Bignorm}[2][]{\Bigl\lVert#2\Bigr\rVert\ifthenelse{\equal{}{#1}}{}{_{#1}}}
\providecommand{\biggnorm}[2][]{\biggl\lVert#2\biggr\rVert\ifthenelse{\equal{}{#1}}{}{_{#1}}}
\providecommand{\Biggnorm}[2][]{\Biggl\lVert#2\Biggr\rVert\ifthenelse{\equal{}{#1}}{}{_{#1}}}
\providecommand{\spr}[3][]{\langle#2,#3\rangle\ifthenelse{\equal{}{#1}}{}{_{#1}}}
\providecommand{\bigspr}[3][]{\bigl\langle#2,#3\bigr\rangle\ifthenelse{\equal{}{#1}}{}{_{#1}}}
\providecommand{\Bigspr}[3][]{\Bigl\langle#2,#3\Bigr\rangle\ifthenelse{\equal{}{#1}}{}{_{#1}}}
\providecommand{\biggspr}[3][]{\biggl\langle#2,#3\biggr\rangle\ifthenelse{\equal{}{#1}}{}{_{#1}}}
\providecommand{\Biggspr}[3][]{\Biggl\langle#2,#3\Biggr\rangle\ifthenelse{\equal{}{#1}}{}{_{#1}}}
\let\originald\d 
\renewcommand{\d}{\ifthenelse{\boolean{mmode}}{\mathrm d}{\originald}}
\providecommand{\dd}{\,\d}
\let\originali\i 
\renewcommand{\i}{\ifthenelse{\boolean{mmode}}{\mathrm i}{\originali}}
\newcommand{\Cu}{C_{\mathsf{u}}}
\newcommand{\Cc}{C_{\mathsf{c}}}
\newtheorem{theorem}{Theorem}[section]
\newtheorem{lemma}[theorem]{Lemma}
\newtheorem{prop}[theorem]{Proposition}
\newtheorem{cor}[theorem]{Corollary}
\newtheorem{defi}[theorem]{Definition}
\newtheoremstyle{rremark}%
       {1.8ex\@plus1ex}     
       {2.1ex\@plus1ex\@minus.5ex} 
       {\normalfont}        
       {0pt}                
       {\bfseries}          
       {.}                  
       {.5em}               
       {}                   
\theoremstyle{rremark}
\newtheorem{remark}[theorem]{Remark}
\begin{document}
\title{Inter model sets in $\RR^d$ are model sets}

\dedicatory{We dedicate this work to the memory of Uwe Grimm.}

\author{Christoph Richard}
\address{Department f\"{u}r Mathematik, Friedrich-Alexander-Universit\"{a}t Erlangen-N\"{u}rnberg,
Cauerstrasse 11, 91058 Erlangen, Germany}
\email{christoph.richard@fau.de}

\author{Nicolae Strungaru}
\address{Department of Mathematical Sciences, MacEwan University \\
10700--104 Avenue, Edmonton, AB, T5J 4S2\\
and \\
Institute of Mathematics ``Simon Stoilow''\\
Bucharest, Romania}
\email{strungarun@macewan.ca}

\begin{abstract}  We show that any translate of a model set is a model set in some modified cut-and-project scheme. Restricting to Euclidean direct space, we show that any translate of an inter model set is a model set in some modified cut-and-project scheme with second countable internal space. In both cases, the window in the modified cut-and-project scheme inherits the topological and measure-theoretic properties of the original window. Our results hold in fact for a class beyond inter model sets, which we call almost model sets.
\end{abstract}

\maketitle

\section{Motivation}

A \emph{cut-and-project scheme} $(G, H, \cL)$ consists of locally compact abelian groups $G,H$ and a discrete and cocompact subgroup $\cL \subseteq G \times H$, also called \emph{lattice}, such that $\cL$ projects injectively to $G$ and densely to $H$. Sometimes $G$ is called \emph{direct space}, and $H$ is called  \emph{internal space}.
Denote the coordinate projections by $\pi^G$ and $\pi^H$, respectively. Given any \emph{window} $W \subseteq H$, the \emph{projection set} $\Lambda_W\subseteq G$ is defined by $\Lambda_W=\pi^G(\cL\cap(G\times W))$.
Often the abbreviation $L=\pi^G(\cL)$ is used. Since $\pi^{G}|_{\cL}: \cL \to G$ is one-to-one, the so-called \emph{star-map} $\star: L \to H$, given by
$\star=\pi^{H}\circ (\pi^{G}|_{\cL})^{-1}$, is well-defined. Using the $\star$-map, we may conveniently write $\cL= \{ (g,g^\star) : g \in L \}$ and $\Lambda_W= \{ g \in L : g^\star \in W \}$.
If $W$ is relatively compact and (Borel) measurable, we call $\Lambda_W$ a \emph{weak model set}. If additionally $W$ admits interior points, then $\Lambda_W$ is called a
\emph{model set}. If additionally $W$ has a  topological boundary $\partial W=\overline{W}\setminus W^\circ$ of zero Haar measure, then  $W$ is called \textit{measure-theoretically regular}, and $\Lambda_W$ is a called a  \emph{regular model set}. Background about model sets can be found e.g. in \cite{TAO, RVM3}.

\smallskip

A class of examples in $G=\RR$ arises from  unimodular Pisot substitutions in one dimension with fixed point $\Lambda\subset \RR$, see \cite[Sec.~3]{BG} for a recent exposition. In that setting, the substitution is used in order to construct a cut-and-project scheme $(\RR, H, \cL)$ and an iterated function system in $H$ such that $\Lambda\subseteq \Lambda_W$, where the compact set $W\subseteq H$ is the attractor of the iterated function system. Under additional assumptions on the substitution we may have $\Lambda_{W^\circ} \subseteq \Lambda \subseteq \Lambda_{W}$, where $W$ is measure-theoretically regular. In that case we call $\Lambda$ an \emph{inter regular model set}, compare \cite{JYL07,LM06}. Then $\Lambda$ and the regular model set $\Lambda_W$ coincide up to mismatches of zero density, and they have the same spectral properties. In abuse of notation, some authors call any translate of such $\Lambda$ a regular model set \cite{LM06}. We will study the question whether an inter regular model set may be a regular model set, in some possibly modified cut-and-project scheme.

For illustration consider the Fibonacci substitution $a\mapsto ab$, $b\mapsto a$, compare \cite[Ex.~4.6]{TAO}. Starting with the seed $a|a$, the squared Fibonacci distribution has a unique fixed point. Now replace $a$ and $b$ by intervals of lengths $\tau$ and $1$, respectively, where $\tau=(1+\sqrt 5)/2$ is the golden mean, center the seed at the origin, and consider the collection $\Lambda\subset \ZZ[\tau]$ of all left interval endpoints. A cut-and-project scheme \cite[Ex.~7.3]{TAO} is obtained from the Minkowski embedding of $\ZZ[\tau]$, for which $\Lambda$ is a regular model set with half closed interval as window, while the attractor of the iterated function system gives the closure of the window.
As the associated cut-and-project scheme has an injective $\star$-map, any inter regular model set in the above cut-and-project scheme is in fact a regular model set in the same cut-and-project scheme. A corresponding measure-theoretically regular window is easily constructed, see the proof of Proposition~\ref{lem:imschar}.

\smallskip

Consider now a weak model set $\Lambda_W$ in a general cut-and-project scheme $(G,H,\cL)$. One may pose the question whether there exists a
modified cut-and-project scheme $(G,H',\cL')$ with injective $\star$-map  such that $\Lambda_W=\pi^G(\cL'\cap(G\times W'))$ for some window $W'\subseteq H'$, where $W'$ inherits the following properties $(T)$ and $(M)$ from the window $W$:
\begin{itemize}
  \item{} \emph{topological properties (T)}
  \begin{itemize}
\item precompactness: $\overline W$ is compact in $H$
\item existence of interior points: $W^\circ\neq\varnothing$
\item topological regularity: $\overline W=\overline{W^\circ}$
\end{itemize}
  \item{} {\emph{measure-theoretic properties (M)}}
\begin{itemize}
  \item measure-theoretic regularity: $\partial W=\overline{W}\setminus W^\circ$ has zero Haar measure
 \end{itemize}
\end{itemize}
A partial answer has already been given in \cite[Lem.~4]{LLRSS}, where the construction of a modified cut-and-project scheme uses the Bohr compactification of $G$. While the construction preserves property $(T)$, property $(M)$ may be lost. Also, the construction does not preserve metrisability of internal space. Restricting to Euclidean direct space and second countable internal space, we give an affirmative answer to the above question in Proposition~\ref{prop:interprojection}.

\smallskip

Inter model sets also arise naturally from a dynamical analysis of model sets. To explain this, identify a point set $\Lambda\subseteq G$ by its Dirac comb $\sum_{g\in \Lambda} \delta_g$ and consider its \emph{hull} $\XX(\Lambda)$, i.e., the translation orbit closure of $\Lambda$ with respect to the vague topology. Assume for the following that $\Lambda=\Lambda_W$ is a model set. Then any $\Gamma\in \XX(\Lambda) $ satisfies  $\Lambda_{t+W^\circ} \subseteq s+\Gamma \subseteq \Lambda_{t+\overline{W}}$ for some $(s,t)\in G\times H$, compare Lemma~\ref{lem:tp} below. Whereas such $\Gamma$ is called an inter model set \cite[Eq.~(7)]{BLM}, we will prefer the name \emph{shifted inter model set}, and we will speak of an \emph{inter model set} if $s=0$. One may ask whether any shifted inter model set is an inter model set, possibly in some different cut-and-project scheme. We will give an affirmative answer in Theorem~\ref{thm:shift} below.

Note that $(s,t)$ is uniquely determined modulo $\cL$ if the window is non-empty and aperiodic, i.e., $t+W=W$ implies $t=0$, compare  \cite{BLM}. In that case any $\Gamma\in \XX(\Lambda)$ can be parametrised by $(s,t)+\cL$. This gives rise to the so-called torus parametrisation, see \cite{KR18,KR19} for a recent approach. Again, the question arises whether an element of $\XX(\Lambda)$ may be a (shifted) model set.
%
If the translated window $t+W$ is in generic position, i.e., if the boundary of $t+W$ does not intersect the projected lattice, then we have $\Lambda_{t+W^\circ} =\Lambda_{t+\overline{W}}=\Lambda_{t+W}$. Therefore, if $\Gamma$ is parametrised by $(s,t)+\cL$ for $t+W$ in generic position, $s+\Gamma=\Lambda_{t+W}$ implies that $\Gamma$ is a shifted model set and hence a model set. We show in Proposition~\ref{prop:hull-reg-is-reg} that the latter conclusion holds for all $\Gamma\in \XX(\Lambda)$.



\smallskip

\section{Setting and main results}\label{sec:main}


In order to formulate our results,  we use the following setting.  We allow for a shift of the lattice  in the cut-and-project construction, which leads to  \emph{shifted projection sets} defined by
\begin{displaymath}
\Lambda_W(x)=\pi^G((x+\cL)\cap(G\times W)) \ .
\end{displaymath}
Note that shifting the lattice is equivalent to translating both the projection set and the window. Indeed, for $x=(s,t)\in G\times H$ we have $\Lambda_W(x)=s+\Lambda_{-t+W}$. Recall that, in abuse of notation, $\Lambda_W(x)$ is sometimes called a model set if $\Lambda_W$ is a model set.
In a given cut-and-project scheme $(G,H,\cL)$ with window $W\subseteq H$, one may consider a \emph{shifted inter projection set} $\Gamma\subseteq G$ satisfying
\begin{displaymath}
\Lambda_{W^\circ}(x)\subseteq \Gamma \subseteq \Lambda_{\overline{W}}(x)
\end{displaymath}
for some $x\in G\times H$.

\begin{remark}\label{rem:sips}
As discussed in the previous section, shifted inter projection sets arise naturally when studying projection sets from a dynamical perspective.
If $\Lambda_W$ is a projection set, then every $\Gamma\in \XX(\Lambda_W)$ is a shifted inter projection set. A proof, which extends previous results from \cite{sch00, BLM, KR18}, will be given in Lemma~\ref{lem:tp} below.
\end{remark}


\smallskip

We argue first that shifting is irrelevant, i.e.,  any translate of an inter model set is an inter model set in some modified cut-and-project scheme, such that properties $(T)$ and $(M)$ of the window are preserved. This is the content of the following statement.

\begin{theorem}\label{thm:shift}
Consider a cut-and-project scheme $(G,H,\cL)$ and some $a \in G$. Then there exist a cut-and-project scheme $(G, H', \cL')$ and some $b \in H'$ such that
\begin{itemize}
\item[(a)] $H$ is an open and closed subgroup of $H'$.
  \item[(b)] $(a,b) \in \cL'$.
  \item[(c)] $\cL' \cap (G \times H)= \cL$.
  \item[(d)] For each $W \subseteq H$ we have $\Lambda_W=\pi^G(\cL'\cap(G\times W))$ and $a+\Lambda_W=\pi^G(\cL'\cap(G\times (b+W)))$.
  \item[(e)] Both $W\subseteq H'$ and $b+W\subseteq H'$  inherit any property in $(T)$ or $(M)$ from $W\subseteq H$.
  \item[(f)] $H'$ inherits metrisability from $H$.
\end{itemize}
\end{theorem}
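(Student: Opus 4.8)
The plan is to enlarge the internal space by one integer direction and then divide out exactly the relations that $\cL$ imposes between $a$ and $L=\pi^G(\cL)$. Write $\star\colon L\to H$ for the star map, so $\cL=\{(g,g^\star):g\in L\}$ and $\{g^\star:g\in L\}$ is dense in $H$, and put $L'=L+\ZZ a$; this is the subgroup of $G$ that the new projected lattice must be. Let $M=\{m\in\ZZ:ma\in L\}$, a subgroup of $\ZZ$ (so $M=\{0\}$ or $M=k\ZZ$), and inside $H\times\ZZ$ set
\begin{displaymath}
D=\bigl\{\,\bigl((ma)^\star,\,-m\bigr):m\in M\,\bigr\}\ .
\end{displaymath}
Since the $\ZZ$-component of $D$ equals $M$ and $D\cap(H\times\{0\})=\{0\}$, the subgroup $D$ is discrete, hence closed, in $H\times\ZZ$, so $H'=(H\times\ZZ)/D$ is a metrisable locally compact abelian group; write $p\colon H\times\ZZ\to H'$ for the quotient map.

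First I would record the structure of $H'$. The restriction of $p$ to the open subgroup $H\times\{0\}$ is injective and open, hence a topological embedding, and its image has $p$-saturation $H\times M$, so $p(H\times\{0\})$ is an open and closed subgroup of $H'$; identifying $H$ with it gives (a), and since $H'$ is first countable, Birkhoff--Kakutani gives (f). Being open in $H'$, $H$ carries the restriction of a Haar measure of $H'$ up to a positive multiple, which will make (e) automatic.

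The key step is building $\cL'$. The surjection $L\times\ZZ\to L'$, $(\ell,j)\mapsto\ell+ja$, has kernel $K=\{(-ja,j):ja\in L\}$, and the homomorphism $\Phi\colon L\times\ZZ\to H\times\ZZ$, $(\ell,j)\mapsto(\ell^\star,j)$, carries $K$ exactly onto $D$ (using $(-ja)^\star=-(ja)^\star$). Hence $\Phi$ descends to a homomorphism $\star'\colon L'\to H'$ with $(\ell+ja)^{\star'}=p(\ell^\star,j)$; I then set $\cL'=\{(g,g^{\star'}):g\in L'\}$ and $b=a^{\star'}=p(0,1)$, giving (b). The one delicate point of the whole argument is the well-definedness of $\star'$: this is where the relations between $a$ and $L$ (equivalently, the order of $a$ modulo $L$) must be absorbed, and $D$ was built precisely so that $\Phi(K)=D$ makes it work uniformly in $M$.

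The rest is verification. Since $H$ is open in $H'$, unwinding $\star'$ shows $\cL'\cap(G\times H)=\cL$, which is (c); discreteness of $\cL'$ follows because $\cL$ is discrete in the open set $G\times H$ and the remaining points of $\cL'$ lie in the other open cosets of $G\times H$. The graph $\cL'$ projects injectively to $G$; $\{g^{\star'}:g\in L'\}$ is a subgroup meeting $H$ in a set containing the dense $\{g^\star:g\in L\}$ and surjecting onto the discrete group $H'/H$, hence dense in $H'$; and $\cL'$ surjecting onto $H'/H$ gives $(G\times H')/\cL'\cong(G\times H)/\cL$, so $\cL'$ is cocompact --- thus $(G,H',\cL')$ is a cut-and-project scheme. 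For (d), the same chasing gives $\cL'\cap(G\times W)=\{(g,g^\star):g\in L,\ g^\star\in W\}$, hence $\pi^G(\cL'\cap(G\times W))=\Lambda_W$; and membership in $G\times(b+W)$ forces $j\equiv1$ modulo $M$ in a representative $\ell+ja$, pinning down $\cL'\cap(G\times(b+W))$ as the set of $\ell_0+a$ with $\ell_0\in L$, $\ell_0^\star\in W$, so $\pi^G(\cL'\cap(G\times(b+W)))=a+\Lambda_W$. For (e), since $H$ is an open and closed subgroup of $H'$, closures, interiors and boundaries of subsets of $H$ agree whether taken in $H$ or $H'$, and $\mu_{H'}|_H$ is a multiple of $\mu_H$, so every property in $(T)$ or $(M)$ transfers from $W\subseteq H$; as $b+H$ is open and closed and translation by $-b$ carries it homeomorphically and measure-proportionally onto $H$, the set $b+W$ inherits them as well. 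Beyond the bookkeeping for $\star'$ and the set-chasing in (c)--(d), I expect no obstacle: all the topology is forced once $H$ sits in $H'$ as an open and closed subgroup.
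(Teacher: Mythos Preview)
Your construction is correct and is essentially the same as the paper's: you form $H'=(H\times\ZZ)/D$ with $D=\langle((ka)^\star,-k)\rangle$ (your $D$ equals the paper's $J$), and your $\cL'$ coincides with the paper's $(\cL\times\{0\})+\langle(a,0,1)\rangle+(\{0\}\times J)$, just packaged via the well-defined star map $\star'$ rather than by explicit coset manipulations. One small wording slip: you assert $H'$ is metrisable outright, whereas this only follows when $H$ is --- but that is exactly what (f) requires, so the argument stands.
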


\begin{remark}
As a consequence of the above theorem, any shifted weak model set, shifted model set, shifted regular model set and shifted inter model set, respectively, is a weak model set, model set, regular model set and inter model set, respectively. Moreover, the same will be true for almost regular model sets which we introduce after Eqn.~\eqref{eq1}: shifted almost regular model sets are almost regular model sets.
\end{remark}

Note that the projection set in the above proposition may arise from a different cut-and-project scheme as the shifted projection set. We will study so-called translation cut-and-project schemes in  Section \ref{tcps}, where the above result is a direct consequence of Proposition~\ref{prop2}. Our construction is inspired by \cite[Thm.~6.1]{KS}.

\smallskip

One may further ask whether an inter projection set may be a projection set, such that properties $(T)$ and $(M)$ of the window are preserved. As pointed out above, the answer is affirmative for cut-and-project schemes where $\cL$ projects injectively to $H$, meaning that the $\star$-map of the cut-and-project scheme is one-to-one. Indeed, we have:

\begin{prop}\label{lem:imschar}
Let $(G,H,\cL)$ be a cut-and-project scheme with injective $\star$-map. Then any inter projection set in $(G,H,\cL)$ is a projection set in $(G,H,\cL)$. Moreover the window of the projection set inherits properties $(T)$ and $(M)$. Assume additionally that $G$ is $\sigma$-compact and that the window $W$ of the inter projection set is precompact and measurable. Then the window of the projection set is precompact and measurable.
\end{prop}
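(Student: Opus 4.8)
The plan is to write down an explicit window. Set $\Gamma^\star=\{g^\star : g\in\Gamma\}\subseteq H$, which makes sense because $\Gamma\subseteq\Lambda_{\overline W}\subseteq L$, and put
\[
W':=W^\circ\cup\Gamma^\star .
\]
Since $\Gamma\subseteq\Lambda_{\overline W}$ forces $\Gamma^\star\subseteq\overline W$, we get the sandwich $W^\circ\subseteq W'\subseteq\overline W$, which will deliver all topological and measure-theoretic statements almost for free. The mild point of the construction is that neither $\Gamma^\star$ alone (which typically has empty interior) nor $\overline W$ (which is too large) would do: one has to glue the open core of $W$ to the star-image of $\Gamma$.

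The first step is to verify $\Lambda_{W'}=\Gamma$. The inclusion $\Gamma\subseteq\Lambda_{W'}$ is immediate, since $\Gamma\subseteq L$ and $\Gamma^\star\subseteq W'$. For the reverse, take $g\in\Lambda_{W'}$, so $g\in L$ and $g^\star\in W^\circ\cup\Gamma^\star$. If $g^\star\in W^\circ$, then $g\in\Lambda_{W^\circ}\subseteq\Gamma$. If $g^\star\in\Gamma^\star$, choose $h\in\Gamma$ with $h^\star=g^\star$; injectivity of the $\star$-map then gives $g=h\in\Gamma$. This last implication is the only place the hypothesis enters, and it is precisely the crux: without injectivity, $\Lambda_{W'}$ could be strictly larger than $\Gamma$.

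Next I would read off properties $(T)$ and $(M)$ from the chain $W^\circ\subseteq (W')^\circ\subseteq\overline{(W')^\circ}\subseteq\overline{W'}\subseteq\overline W$ induced by the sandwich. Precompactness: $\overline{W'}$ is a closed subset of the compact set $\overline W$. Interior points: $(W')^\circ\supseteq W^\circ\neq\varnothing$. Topological regularity: if $\overline W=\overline{W^\circ}$, the chain collapses to a chain of equalities, so $\overline{W'}=\overline{(W')^\circ}$. Measure-theoretic regularity: $\partial W'=\overline{W'}\setminus (W')^\circ\subseteq\overline W\setminus W^\circ=\partial W$, and $\partial W'$ is Borel (closed minus open), hence Haar-null whenever $\partial W$ is.

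For the supplementary claim I would invoke $\sigma$-compactness of $G$: writing $G=\bigcup_n K_n$ with each $K_n$ compact, the set $\cL\cap(K_n\times\overline W)$ is the intersection of a discrete subgroup with a compact set, hence finite (a discrete subgroup of a locally compact abelian group is closed, so this intersection is genuinely finite, not merely discrete). Therefore $\Lambda_{\overline W}$ is countable, and so is $\Gamma^\star\subseteq\star(\Lambda_{\overline W})$. A countable subset of the Hausdorff group $H$ is Borel, so $W'=W^\circ\cup\Gamma^\star$ is Borel measurable; its precompactness was already established. I do not anticipate a serious obstacle anywhere; the only steps needing a little care are the injectivity argument in paragraph two and the closedness of discrete subgroups used for countability.
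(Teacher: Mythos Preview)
Your proof is correct and follows essentially the same approach as the paper: you define the identical window $W'=W^\circ\cup\Gamma^\star$, use injectivity of the $\star$-map in the same way to get $\Lambda_{W'}=\Gamma$, and derive properties $(T)$ and $(M)$ from the same sandwich $W^\circ\subseteq W'\subseteq\overline W$. Your treatment of the countability step is slightly more detailed (working through $\cL\cap(K_n\times\overline W)$ rather than citing uniform discreteness of $\Lambda_{\overline W}$), but this is a cosmetic difference only.
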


\begin{proof}
Consider a cut-and-project scheme $(G,H,\cL)$ such that $\cL$ projects injectively to $H$. Consider $\Gamma\subseteq G$ and $W\subseteq H$ such that $\Lambda_{W^\circ}\subseteq \Gamma \subseteq \Lambda_{\overline{W}}$.
Due to injectivity of the $\star$-map, we have $\Gamma=\Lambda_{\Gamma^\star}$ for any $\Gamma\subseteq L$. Then $W'=W^\circ \cup \Gamma^\star$ satisfies
$\Lambda_{W'}=\Lambda_{W^\circ}\cup \Lambda_{\Gamma^\star}=\Lambda_{W^\circ}\cup \Gamma=\Gamma$.
Let us analyse the window properties $(T)$ and $(M)$. Note that the definition of $W'$  and $\Lambda_{W^\circ}\subseteq \Gamma\subseteq \Lambda_{\overline{W}}$  give $W^\circ  \subseteq W' \subseteq \overline{W}$, which implies
\begin{equation}\label{eq11}
W^\circ \subseteq (W')^\circ \subseteq W' \subseteq \overline{W'} \subseteq \overline{W} \ .
\end{equation}
\noindent Assume that $W$ is precompact. Then $\overline{W}$ is compact. Therefore $W'$ is precompact by \eqref{eq11}.

\noindent Assume $W^\circ \neq \varnothing$. Then  $(W')^\circ \neq \varnothing$ follows from \eqref{eq11}.

\noindent Assume $\overline{W}=\overline{W^\circ}$. Together with  \eqref{eq11} this gives
$\overline{W}  = \overline{W^\circ } \subseteq \overline{W'} \subseteq \overline{W}$.
We thus can conclude $\overline{W} = \overline{W'} = \overline{W^\circ }  \subseteq  \overline{(W')^\circ}$, and the claim follows.

\noindent Assume that $W$ is measure-theoretically regular.  By \eqref{eq11} we have
$\partial(W') \subseteq \overline{W}  \backslash W^\circ =\partial W$.
Thus $W'$ is measure-theoretically regular.

\noindent Finally, assume that $G$ is $\sigma$-compact and that $W$ is precompact and measurable. Then $W'$ is precompact by the previous argument, and $W'$ measurable as $\Gamma$ is countable, due to uniform discreteness of $\Lambda_{\overline{W}}$ and $\sigma$-compactness of $G$.
\end{proof}

\begin{remark}
Examples of setting in Proposition~\ref{lem:imschar} are complete cut-and-project schemes, i.e., cut-and-project schemes where $\cL$ projects both injectively to $H$ and densely to $H$. For $G=\mathbb \RR^d$ and $H=\mathbb \RR^m \times \TT^n \times \DD$, where $\DD$ is a countable discrete group, complete cut-and-project schemes have recently been classified \cite[Thm.~3]{AACM19}.
\end{remark}

In Section \ref{sec:ecps} we study when a projection set may arise from a cut-and-project scheme $(G,H,\cL)$ with injective $\star$-map. We will construct such a cut-and-project scheme that also preserves properties $(T)$ and $(M)$ of a window. Our construction assumes Euclidean direct space $G$ and assumes  $H$ to be second countable. The following result is a reformulation of Theorem~\ref{thm2} below.




\begin{prop}\label{prop:interprojection}
Consider a cut-and-project scheme $(\RR^d,H,\cL)$ with second countable internal space $H$. Then any inter projection set in $(\RR^d,H,\cL)$ is a projection set, and the window of the projection set inherits properties $(T)$ and $(M)$ from the window of the inter projection set. Assume additionally that the window $W$ related to the inter projection set is precompact and measurable. Then the window of the projection set is precompact and measurable.
\end{prop}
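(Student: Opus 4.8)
The plan is to reduce everything to Proposition~\ref{lem:imschar} by fabricating, out of $(\RR^d,H,\cL)$, a modified cut-and-project scheme $(\RR^d,H',\cL')$ whose $\star$-map is \emph{injective} and in which the given inter projection set reappears as an inter projection set with essentially the same window. (This modified scheme is what Theorem~\ref{thm2} will provide.) As a preliminary reduction, absorbing the translation in the definition of inter projection set into the window, we may assume $\Lambda_{W^\circ}\subseteq\Gamma\subseteq\Lambda_{\overline W}$ for a window $W\subseteq H$.

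Set $N=\cL\cap(\RR^d\times\{0\})$. Since $\cL$ is discrete (hence closed) in $\RR^d\times H$ and $\pi^G|_{\cL}$ is injective, identifying $N$ with $\pi^G(N)\subseteq\RR^d$ one checks that $N$ is a discrete subgroup of $\RR^d$, hence $N\cong\ZZ^k$ for some $0\le k\le d$. This is exactly where the hypothesis of Euclidean direct space is used: the obstruction to injectivity of $\star$ is a \emph{finitely generated free abelian} group. If $k=0$ then $\star$ is already injective and Proposition~\ref{lem:imschar} applies directly (using $\sigma$-compactness of $\RR^d$ for the measurability statement), so assume $k\ge 1$.

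Now fix a compact torus $K$ together with a homomorphism $\rho_0\colon N\to K$ that is injective and has dense image; such a $\rho_0$ exists because $N\cong\ZZ^k$ (e.g. a diagonal map into $\TT^k$ built from irrational numbers, the density being the usual character computation). Since $K$ is divisible, $\rho_0$ extends to a homomorphism $\rho\colon L\to K$. Put $H'=H\times K$ and $\cL'=\{(g,g^\star,\rho(g)):g\in L\}$. One then verifies that $(\RR^d,H',\cL')$ is a cut-and-project scheme: discreteness of $\cL'$ is immediate since $K$ is compact; $\pi^G|_{\cL'}$ is injective; the new $\star$-map $g\mapsto(g^\star,\rho(g))$ is injective because its kernel lies in $N$ and $\rho_0$ is injective on $N$; and cocompactness follows because the projection forgetting $K$ maps $\cL'$ isomorphically onto $\cL$ with kernel $\{0\}\times\{0\}\times K$ meeting $\cL'$ trivially, so $(\RR^d\times H')/\cL'$ is a compact extension of the compact group $(\RR^d\times H)/\cL$ by $K$. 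The most delicate point — and the one I expect to require the most care — is density of $\pi^{H'}(\cL')$ in $H'$: writing $M$ for its closure, the projection $H'\to H$ is closed because $K$ is compact, so $\pi^H(M)=H$; moreover $M$ contains $\{0\}\times\overline{\rho_0(N)}=\{0\}\times K$; combining these two facts yields $M=H'$.

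It remains to transfer $\Gamma$ to the new scheme. Since $\Lambda_{W^\circ\times K}=\Lambda_{W^\circ}$ and $\Lambda_{\overline W\times K}=\Lambda_{\overline W}$ in $(\RR^d,H',\cL')$, and since $(W\times K)^\circ=W^\circ\times K$ and $\overline{W\times K}=\overline W\times K$, the set $\Gamma$ is an inter projection set in $(\RR^d,H',\cL')$ with window $\widetilde W:=W\times K$. A direct check shows $\widetilde W$ inherits each property in $(T)$ and $(M)$ from $W$ (for measure-theoretic regularity one uses $\partial\widetilde W=\partial W\times K$ and that $K$ has finite Haar measure), and that $\widetilde W$ is precompact and measurable whenever $W$ is. As $(\RR^d,H',\cL')$ has injective $\star$-map and $\RR^d$ is $\sigma$-compact, Proposition~\ref{lem:imschar} now furnishes a window $W'\subseteq H'$ with $\Gamma=\pi^G(\cL'\cap(\RR^d\times W'))$, where $W'$ inherits $(T)$ and $(M)$ from $\widetilde W$, and is precompact and measurable when $\widetilde W$ is. Transitivity of these inheritances gives the assertion. (Adjoining the torus $K$ also preserves second countability and metrisability of the internal space, which matters for the more detailed statement of Theorem~\ref{thm2}.)
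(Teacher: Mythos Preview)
Your argument is correct and follows the paper's overall strategy---adjoin a compact torus to the internal space to force the $\star$-map to be injective, then invoke Proposition~\ref{lem:imschar}---but the construction and the key density step are genuinely different. The paper (via Lemmas~\ref{lem2}--\ref{lem4} and Theorem~\ref{thm2}) takes $\KK=\RR^d/D$ for a full-rank lattice $D$ chosen by a transcendence-type counting argument so that $D\cap\pi^G(\cL)=\{0\}$ and $D^\circ\cap\pi^{\hat G}(\cL^\circ)=\{0\}$; density of $\pi^{H'}(\cL')$ in $H\times\KK$ is then obtained through Fourier--Bohr coefficients and a Weyl equidistribution argument. You instead identify $N=\ker(\star)\cong\ZZ^k$, embed it densely into a $k$-torus $K$, and extend this to all of $L$ using divisibility of $K$; your density proof is the elementary observation that the closure $M\subseteq H\times K$ is a subgroup containing $\{0\}\times K$ and surjecting onto $H$ under a closed projection, hence $M=H\times K$.

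Your route is more economical (a $k$-torus rather than a $d$-torus), avoids harmonic analysis entirely, and in fact never uses second countability of $H$---so it proves a slightly stronger statement than asserted. What the paper's approach buys is a canonical map $\psi\from\RR^d\to\KK$ (rather than an arbitrary extension obtained from divisibility) and the stronger intermediate result that $\psi(\Lambda_V)$ is dense in $\KK$ for every non-empty open $V\subseteq H$, which may be of independent interest. Both approaches then finish identically via Proposition~\ref{lem:imschar}, using $\widetilde W=W\times K$ and the relations $(W\times K)^\circ=W^\circ\times K$, $\overline{W\times K}=\overline W\times K$, $\partial(W\times K)=\partial W\times K$.
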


In order to refine the above result, we investigate when the internal space can be assumed to be second countable. This is indeed the case for regular model sets. In fact, we prove a stronger statement in Proposition~\ref{mainsecond} below, by enlarging the class of inter model sets as follows.

\begin{defi}We say that $\Gamma \subseteq G$ is an \emph{almost model set} if there exist a cut-and-project scheme $(G, H, \cL)$ and sets $U \subseteq W \subseteq H$ such that $U$ is non-empty open, $W$ is compact,  and
\begin{equation}\label{eq1}
\Lambda_U \subseteq \Gamma \subseteq \Lambda_W \ .
\end{equation}
If additionally $W \backslash U$ has vanishing Haar measure, we call $\Gamma$ an \emph{almost regular model set}.
We will also refer to $\Gamma$ as being a \emph{$(U,W)$-almost model set}.

\end{defi}

\begin{remark}
\begin{itemize}
  \item[(a)] If $\Gamma$ is an inter (regular) model set, then $\Gamma$ is an almost (regular) model set.
  \item[(b)] For an $(U,W)$-almost model set $\Gamma$, the density formula for weak model sets \cite[Prop.~3.4]{HuckRichard15} combined with \eqref{eq1} gives
\begin{displaymath}
\begin{split}
\dens(\cL) &\cdot m_H(U) \leq \underline{\dens}(\Lambda_U) \leq  \underline{\dens}(\Gamma) \\
&\leq  \overline{\dens}(\Gamma) \leq  \overline{\dens}(\Lambda_W) \leq \dens(\cL) \cdot  m_H(W) \ .
\end{split}
\end{displaymath}
Here $m_H$ denotes a Haar measure on $H$, $\dens(\cL)$ denotes the inverse measure of a measurable fundamental domain of $\cL$, and $ \underline{\dens}$, $\overline{\dens}$ denote the lower and upper density along a given averaging sequence.
In particular, any almost regular model set $\Gamma$ differs from the regular model set $\Lambda_W$ on a set of density zero. This readily implies that any almost regular model set $\Gamma$ is Weyl almost periodic \cite{LSS20}, has the same autocorrelation, diffraction and Fourier--Bohr coefficients as
$\Lambda_W$, and that its hull, equipped with its natural translation action and its unique ergodic probability measure, is measure-theoretically isomorphic to that of $\Lambda_W$.
\end{itemize}
\end{remark}

We can now state our main result, whose proof will be given at the end  of Section~\ref{sec:ecps}.
%
\begin{theorem}\label{thm:mainintro}
Let $\Lambda \subseteq \RR^d$ be any shifted almost model set from some cut-and-project scheme. Then, $\Lambda$ is a model set in some modified cut-and-project scheme with second countable internal space and injective $\star$-map, such that measurability and properties $(T)$ and $(M)$ of the window are preserved.
In particular, any translate of an inter (regular) model set in Euclidean space is a (regular) model set.
\end{theorem}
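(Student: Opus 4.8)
The plan is to assemble Theorem~\ref{thm:mainintro} from the three ingredients already set up in the excerpt: the shift-removal result (Theorem~\ref{thm:shift}), the second-countable reduction (to be proved in Proposition~\ref{mainsecond}), and the injectivisation result (Proposition~\ref{prop:interprojection}, i.e. Theorem~\ref{thm2}). Concretely, let $\Lambda\subseteq\RR^d$ be a shifted almost model set, so that $\Lambda_U(x)\subseteq\Lambda\subseteq\Lambda_W(x)$ for some cut-and-project scheme $(\RR^d,H,\cL)$, some $x\in\RR^d\times H$, some non-empty open $U$ and some compact $W$ with $U\subseteq W$. First I would apply Theorem~\ref{thm:shift} with $a\in\RR^d$ the direct-space component of $x$: this produces an enlarged cut-and-project scheme $(\RR^d,H',\cL')$ and $b\in H'$ with $(a,b)\in\cL'$, $\cL'\cap(\RR^d\times H)=\cL$, and $\Lambda_V=\pi^{\RR^d}(\cL'\cap(\RR^d\times V))$, $a+\Lambda_V=\pi^{\RR^d}(\cL'\cap(\RR^d\times(b+V)))$ for every $V\subseteq H$. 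Since shifting the lattice is the same as translating projection set and window, this rewrites $\Lambda$ as an honest (unshifted) almost model set in $(\RR^d,H',\cL')$: with $U':=b'+U$ and $W':=b'+W$ (where $b'\in H'$ corresponds to the internal component of $x$ absorbed via $b$) one gets $\Lambda_{U'}\subseteq\Lambda\subseteq\Lambda_{W'}$, and by part~(e) of Theorem~\ref{thm:shift} the sets $U',W'\subseteq H'$ still carry whatever properties $(T)$, $(M)$ and measurability $U,W$ had; by part~(f), $H'$ is metrisable if $H$ was.

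Second, I would invoke Proposition~\ref{mainsecond} (the promised strengthening) to replace $H'$ by a second countable internal space. The idea there is that an almost model set only ``sees'' the window through the countable set $\Lambda_{W'}$ of star-values, together with the open set $U'$; one restricts the cut-and-project scheme to the closed subgroup of $H'$ generated by a suitable countable dense-in-$H'$-relevant subset, or passes to a quotient, in such a way that $\Lambda$ remains an almost model set whose new window inherits $(T)$, $(M)$ and measurability. After this step we may assume, without loss of generality, that $\Lambda$ is an almost model set in a cut-and-project scheme $(\RR^d,H,\cL)$ with $H$ second countable, i.e. exactly the hypothesis of Proposition~\ref{prop:interprojection}.

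Third, apply Proposition~\ref{prop:interprojection}: since $\Lambda$ now satisfies $\Lambda_{U}\subseteq\Lambda\subseteq\Lambda_{W}$ with $U$ open non-empty and $W$ compact (hence precompact and measurable) inside a second countable $H$, the proposition produces a modified cut-and-project scheme with second countable internal space and injective $\star$-map in which $\Lambda=\Lambda_{\widetilde W}$ for a window $\widetilde W$ that inherits properties $(T)$, $(M)$ and measurability. This $\widetilde W$ has non-empty interior (from $U$) and is precompact (from $W$), so $\Lambda$ is genuinely a model set, and a regular one whenever $W\setminus U$ had Haar measure zero — which covers the inter regular model set case. I would close by noting that an inter (regular) model set is a $(U,W)$-almost (regular) model set with $U=W^\circ$ and $W=\overline W$, so the final sentence of the theorem is the special case $a$ arbitrary, $U=W^\circ$, $W=\overline W$.

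The main obstacle is Proposition~\ref{mainsecond}, the passage to second countable internal space: for a general weak-model-set-type window inside a non-metrisable $H$ one cannot simply quotient, because the topological boundary — and hence regularity and measure-theoretic regularity — can behave badly under quotient maps, and one must be careful that $\Lambda_U$ and $\Lambda_W$ are unchanged so that the almost-model-set sandwich \eqref{eq1} survives. The clean way is to choose the subgroup/quotient so that the relevant countable data ($\Lambda_W$ and a countable base of the open set $U$ restricted appropriately) is preserved, and to check that precompactness, interior points, topological regularity $\overline W=\overline{W^\circ}$ and the null-boundary condition all descend; the other two applications (Theorem~\ref{thm:shift} and Proposition~\ref{prop:interprojection}) are then essentially bookkeeping, tracking that at each stage the window retains its $(T)$, $(M)$ and measurability attributes, which the cited statements explicitly guarantee.
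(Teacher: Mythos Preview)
Your proposal is correct and follows the same three-step assembly as the paper: remove the shift via Theorem~\ref{thm:shift}, reduce to second countable internal space via Proposition~\ref{mainsecond}, then pass to an injective $\star$-map. One small imprecision: in the third step you invoke Proposition~\ref{prop:interprojection}, but that statement is phrased for \emph{inter projection sets} ($\Lambda_{W^\circ}\subseteq\Gamma\subseteq\Lambda_{\overline W}$), whereas after steps one and two you only have the weaker almost-model-set sandwich $\Lambda_{U}\subseteq\Lambda\subseteq\Lambda_{W}$, which need not be an inter projection set for any single window. The paper handles this by invoking Theorem~\ref{thm2} directly to obtain the injective $\star$-map and then building the window as $U'\cup\Lambda^\star$ exactly as in the proof of Proposition~\ref{lem:imschar}; that argument goes through verbatim with the pair $(U',K')$ in place of $(W^\circ,\overline W)$, yielding $\Lambda=\Lambda_{U'\cup\Lambda^\star}$ with all properties $(T)$, $(M)$ and measurability inherited. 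Also, within the proof of the theorem Proposition~\ref{mainsecond} is simply cited, not re-proved, so your ``main obstacle'' paragraph belongs to the proof of that proposition rather than of Theorem~\ref{thm:mainintro} itself.
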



Let us point out an application of Theorem~\ref{thm:mainintro} which is related to point sets obtained from substitution. We cite Theorem 5.3 from \cite{JYL07}, where we restrict to Delone sets instead of Delone multisets for convenience. Recall that a Delone set is a point set that is uniformly discrete and relatively dense. For the notion of finite local complexity, see the remark before Proposition~\ref{prop:flcrep}.
\begin{theorem}[\cite{JYL07}] Let $ \Lambda\subset \RR^d$ be a primitive substitution Delone set such that every
$\Lambda$-cluster is legal and such that $\Lambda$ has finite local complexity. Then $\Lambda$ has pure point dynamical spectrum
if and only if $\Lambda$ is a shifted inter model set with topologically regular window. \qed
\end{theorem}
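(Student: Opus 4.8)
This final statement is quoted verbatim from \cite{JYL07}, so I only indicate how its proof runs, combining the substitution structure with the cut-and-project constructions above. There are two implications, and the reduction of shifts provided by Theorem~\ref{thm:shift} enters the easier of them.

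For the forward implication I would first pass from the dynamical to the diffraction side: since $\XX(\Lambda)$ is minimal and uniquely ergodic (by primitivity, legality of all clusters and finite local complexity), pure point dynamical spectrum is equivalent to pure point diffraction. Writing $\phi$ for the substitution and $Q$ for its expansive inflation, the module $L=\Zspan(\Lambda-\Lambda)$ is finitely generated and $Q$-invariant, and pure pointedness makes the maximal equicontinuous factor of $\XX(\Lambda)$ a rotation on a compact abelian group. This rotation supplies an internal space $H$, obtained by completing $L$ so that the star-map is continuous and $\phi$ acts as a contraction, together with a discrete cocompact lattice $\cL=\{(x,x^\star):x\in L\}\subseteq\RR^d\times H$; here one uses that pure point diffraction forces $\Lambda$ to be a Meyer set, so that $\cL$ is genuinely discrete and cocompact. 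In this scheme the window $W=\overline{\Lambda^\star}$ is the attractor of the graph-directed iterated function system induced by $\phi$ on $H$, whence $\Lambda_{W^\circ}\subseteq\Lambda\subseteq\Lambda_{\overline W}$ by construction, $W^\circ\neq\varnothing$ since the attractor carries positive Haar measure, and $\overline W=\overline{W^\circ}$ because the attractor satisfies the open set condition. Thus $\Lambda$ is a shifted inter model set with topologically regular window.

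For the converse I would use Theorem~\ref{thm:shift} to pass from a shifted inter model set to an unshifted one without losing topological regularity of the window, so that $\Lambda$ is an inter model set in a scheme where, by the substitution, $W$ is again a graph-directed attractor with $\overline W=\overline{W^\circ}$. The point is to upgrade this topological regularity to measure-theoretic regularity: the boundary $\partial W$ is itself the attractor of an associated graph-directed system, and primitivity of $\phi$ together with unique ergodicity excludes $m_H(\partial W)>0$, forcing $\partial W$ to be Haar null. Hence $W\setminus W^\circ$ is null and $\Lambda$ is an almost regular model set. As recorded in the remark following \eqref{eq1}, every almost regular model set has the same pure point diffraction as the regular model set $\Lambda_W$, and pure point diffraction returns pure point dynamical spectrum under unique ergodicity, as above.

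The main obstacle lies on both ends of this dictionary. In the forward direction it is the concrete realisation of the internal space as a completion of $L$ on which $\phi$ contracts, the verification of the Meyer property so that $\cL$ is a genuine lattice, and the identification of the window with the inflation attractor. In the converse direction the delicate step is precisely the promotion of topological regularity of a self-affine window to the Haar-null boundary needed for pure pointedness; this is where primitivity of the substitution is indispensable, and where the weaker hypothesis appearing in the statement can nevertheless suffice.
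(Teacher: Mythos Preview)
The paper does not prove this theorem: it is quoted from \cite{JYL07} as an external input and closed with \qed immediately after the statement. There is therefore nothing in the paper to compare your sketch against. The authors invoke Lee's result only to illustrate that their Theorem~\ref{thm:mainintro} allows one to replace ``shifted inter model set'' by ``model set'' in the conclusion; they do not reprove it.

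That said, your converse sketch has a genuine gap. From the hypothesis that $\Lambda$ is a shifted inter model set with topologically regular window you pass, via Theorem~\ref{thm:shift}, to an unshifted inter model set and then assert that ``by the substitution, $W$ is again a graph-directed attractor'' and that primitivity forces $m_H(\partial W)=0$. Neither step is justified: the translation scheme of Theorem~\ref{thm:shift} produces a new internal space in which there is no reason for the window to be self-affine, and topological regularity alone does not give a Haar-null boundary. Lee's actual converse in \cite{JYL07} obtains pure point spectrum by substitution-specific arguments tied to overlap coincidences and the control-point structure, not by the abstract promotion you outline. There is also an anachronism: Theorem~\ref{thm:shift} is a contribution of the present paper and cannot serve as an ingredient in the proof of a 2007 theorem; the logical dependence runs the other way.
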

Due to our results, we may replace ``shifted inter model set'' by ``model set''.

\section{Translation cut-and-project schemes}\label{tcps}

Consider a projection set $\Lambda_W$ in some cut-and-project scheme $(G,H, \cL)$ with window $W\subseteq H$. Given some $a \in G$, we ask whether $a+\Lambda_W$ is a projection set. The answer is affirmative for $a \in \pi_G(\cL)$, as we then have $a+\Lambda_W=\Lambda_{a^\star+W}$. In general $a+\Lambda_W$ is a projection set in some modified cut-and-project set $(G, H', \cL')$, where $H'$ is a factor group of $H \times \mathbb Z$. Our construction is inspired by  \cite[Thm.~6.1]{KS}.  Fix nonzero $a\in G$ and consider the group $\langle a \rangle=\Zspan(\{a\})$ generated by $a$. We first consider the case $\langle a \rangle \cap \pi_G(\cL)= \{ 0 \}$, which we call the incommensurate case.

\subsection{The incommensurate case}

A natural construction leads to the following result.

\begin{prop}[translation CPS]\label{prop1} Consider a cut-and-project scheme $(G, H, \cL)$. Fix nonzero $a \in G$ so that $\langle a \rangle \cap  \pi_G(\cL)=\{0\}$.
Define $H' = H \times \ZZ$ and $\cL'= (\cL\times \{0\})+\langle (a,0,1)\rangle$.
Then the following hold.
\begin{itemize}
  \item[(a)] $(G, H', \cL')$ is a cut-and-project scheme.  If $\cL$ is countable, then so is $\cL'$.
  \item[(b)] For $W\subseteq H$ and $n\in \ZZ$ we have $na+\Lambda_W=\Lambda'_{W'}$, where $W'=W\times \{n\}$.
  \item[(c)] $H$ is homeomorphic to $H \times  \{0 \}$ and $H \times \{ n \}$ is open and closed in $H'$.
  \item[(d)] The window $W'=(W\times \{n\})\subseteq H'$ inherits properties $(T)$ and $(M)$ from $W\subseteq H$. If $W$ is measurable, then $W'$ is measurable.
\end{itemize}
\end{prop}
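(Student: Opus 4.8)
The plan is to verify the four claims essentially by unwinding definitions, the only genuine content being the cut-and-project axioms in (a). First I would set up notation: write $\pi^G, \pi^{H'}$ for the coordinate projections on $G \times H'$, and note that a general element of $\cL'$ has the form $(g + na, g^\star, n)$ with $g \in L = \pi^G(\cL)$ and $n \in \ZZ$. So $\pi^G(\cL') = L + \langle a\rangle$ and $\pi^{H'}(\cL') = \{(g^\star, n) : g \in L,\ n \in \ZZ\} = L^\star \times \ZZ$ (using that $\star$ is defined on all of $L$).

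For (a), there are four things to check: $\cL'$ is a subgroup, $\cL'$ is discrete in $G \times H'$, $\cL'$ is cocompact, $\pi^G$ is injective on $\cL'$, and $\pi^{H'}(\cL')$ is dense in $H'$. Being a subgroup is clear since it is the sum of a subgroup and a cyclic group inside the abelian group $G \times H \times \ZZ$. Injectivity of $\pi^G$ on $\cL'$ is where the incommensurability hypothesis enters: if $(g+na, g^\star, n)$ and $(h+ma, h^\star, m)$ have the same $G$-component, then $g - h = (m-n)a \in \langle a\rangle \cap L = \{0\}$, so $g = h$ and $(m-n)a = 0$; since $a \neq 0$ and $\langle a\rangle \cap L = \{0\}$ forces $\langle a\rangle$ to be infinite cyclic (as $a \in L + \langle a \rangle$ would give a relation otherwise --- more carefully, $ka = 0$ with $k\neq 0$ would put $0 = ka$ but that is fine; the point is $na = ma$ with the element lying over it forces $n=m$ because the $\ZZ$-coordinate is exactly $n$), we get $n = m$. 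Discreteness: $\cL' \cap (U \times V \times \{0\})$ for small neighbourhoods is just $(\cL \cap (U\times V)) \times \{0\}$ translated, hence finite near $0$; combined with the $\ZZ$-factor being discrete this gives discreteness of $\cL'$ in $G\times H \times \ZZ$. Cocompactness: $\cL$ cocompact in $G\times H$ gives a compact $K$ with $K + \cL = G\times H$; then $K \times \{0\} + \cL'$ covers $G \times H \times \ZZ$ because the $\langle(a,0,1)\rangle$ summand sweeps through all integer levels. Density of $\pi^{H'}(\cL') = L^\star \times \ZZ$ in $H \times \ZZ$: $L^\star$ is dense in $H$ by hypothesis, and $\ZZ$ is (trivially) dense in the discrete group $\ZZ$. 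Countability of $\cL'$ when $\cL$ is countable is immediate from $\cL' \subseteq \cL \times \ZZ + \ZZ\cdot(a,0,1)$, a countable union of translates of a countable set. I expect this verification to be the main (though still routine) obstacle; the subtlety is making sure the incommensurability hypothesis is invoked exactly where injectivity of $\pi^G|_{\cL'}$ needs it.

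For (b), (c), (d) the arguments are short. For (b): $g \in \Lambda'_{W \times \{n\}}$ iff there is an element of $\cL'$ over $g$ with internal part in $W\times\{n\}$, i.e. $g = g_0 + na$ with $g_0 \in L$, $g_0^\star \in W$, which says exactly $g \in na + \Lambda_W$. For (c): the map $h \mapsto (h,0)$ is a topological embedding of $H$ into $H' = H\times \ZZ$ with image $H \times \{0\}$, which is open and closed since $\{0\}$, and likewise $\{n\}$, is open and closed in $\ZZ$; hence $H \times \{n\}$ is open and closed in $H'$. For (d): since $H \times \{n\}$ is clopen in $H'$, for any $S \subseteq H$ we have $\overline{S \times \{n\}} = \overline{S} \times \{n\}$ and $(S\times\{n\})^\circ = S^\circ \times \{n\}$ computed in $H'$, so $\partial(S\times\{n\}) = (\partial S)\times\{n\}$; thus precompactness, nonemptiness of interior, and topological regularity transfer verbatim from $W$ to $W'$. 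For measure-theoretic regularity note that a Haar measure on $H' = H \times \ZZ$ is a product of a Haar measure on $H$ with counting measure on $\ZZ$, so $m_{H'}(\partial W') = m_H(\partial W)$, which is zero when $W$ is measure-theoretically regular; and $W' = W\times\{n\}$ is measurable whenever $W$ is, since it is a product of measurable sets. This completes the proof.
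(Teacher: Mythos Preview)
Your proposal is correct and follows essentially the same approach as the paper: both verify the Delone property of $\cL'$ via a compact set $K\times\{0\}$ and a zero neighbourhood $U\times\{0\}$, deduce injectivity of $\pi^G|_{\cL'}$ from the incommensurability hypothesis, and read off density and parts (b)--(d) directly from the product structure $H'=H\times\ZZ$. Your parenthetical detour about whether $\langle a\rangle$ is infinite cyclic is tangled and unnecessary---the paper, like you in the end, simply uses $na\in\langle a\rangle\cap\pi^G(\cL)=\{0\}$ and tacitly assumes $a$ is non-torsion (harmless for the intended $G=\RR^d$).
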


\begin{proof}
``(a)'' By definition, countability of $\cL'$ follows from countability of $\cL$. It is clear that $\cL'$ is a group. In order to show that $\cL'$ is a lattice, it suffices to show that $\cL'$ is a Delone set. Since $\cL$ is a lattice in $G \times H$, there exists a compact set $K \subseteq G\times H$ such that $\cL+K=G\times H$. This implies
\begin{displaymath}
\cL'+(K\times \{0\})=(\cL\times \{0\})+\langle (a,0,1)\rangle+(K\times \{0\})=(G\times H\times\{0\})+\langle (a,0,1)\rangle=G\times H\times \mathbb Z =G\times H' \ .
\end{displaymath}
Since $\cL$ is uniformly discrete, there exists a zero neighborhood $U\subseteq G\times H$ such that $\cL \cap U = \{ (0,0) \}$. Then the open zero neighborhood $U\times \{0\}\subseteq G\times H'$ satisfies
\begin{displaymath}
\cL' \cap (U\times \{0\}) = ((\cL\times \{0\})+\langle (a,0,1)\rangle)  \cap (U\times \{0\})
=  (\cL\times \{0\})\cap (U\times \{0\}) = \{(0,0,0)\} \ .
\end{displaymath}

\smallskip

\noindent We show next that $\pi^{G}: G\times H'\to G$ is one-to-one when restricted to $\cL'$. Assume $\ell'=(\ell,0)+n(a,0,1)\in\cL'$ for some $\ell\in \cL$ and $n\in \ZZ$ such that $\pi^G(\ell')=na+\pi^G(\ell)=0$. By assumption, we get $n=0$ and $\pi^G(\ell)=0$. As $\pi^{G}: G\times H\to G$ is one-to-one when restricted to $\cL$, we get $\ell=0$, which implies $\ell'=(\ell,0)+n(a,0,1)=(0,0,0)$. This proves the claim.

\smallskip

\noindent Denseness of  $\pi^{H'}(\cL')$ in $H'$ is inherited from $(G,H,\cL)$. Indeed, consider any non-empty open set $U \subseteq H'$. Then there exist some non-empty open set $V  \subseteq H$ and $n \in \ZZ$ such that $V \times  \{ n \} \subseteq U$.
Since $(G,H ,\cL)$ is a cut-and-project scheme, there exists some $\ell \in \cL$ such that $\pi^H(\ell)\in V$. We then have $\pi^{H'}((\ell,0)+n(a,0,1))=(\pi^H(\ell),0)+(0,n)=(\pi^H(\ell),n) \in \pi^{H'}(\cL')\cap (V\times \{n\})\subseteq \pi^{H'}(\cL')\cap U$. This proves (a).

\smallskip

\noindent ``(b)'' follows by direct calculation. Let $W'=W\times \{n\}\in H'$. Then
\begin{displaymath}
\begin{split}
\Lambda'_{W'}&=\pi^G(\cL'\cap (G\times W'))= \pi^G(((\cL\times \{0\})+\langle (a,0,1)\rangle) \cap (G\times W\times \{n\}))\\
&= \pi^G(((\cL\times \{0\})+n (a,0,1)) \cap (G\times W\times \{n\})) = \pi^G(\cL\cap (G\times W))+na=\Lambda_W + na \ .
\end{split}
\end{displaymath}
\noindent ``(c)'' is obvious, and ``(d)'' follows from ``(c)''.
\end{proof}

\subsection{The general case}\label{rem:notinj}

The lattice $\cL'$ in the above construction does no longer project injectively to $G$ if $\langle a \rangle\cap \pi_G(\cL)\ne \{0\}$. Indeed, let $m\in \mathbb N$ be minimal such that $b=ma\in \pi_G(\cL)$, and consider arbitrary $(na+x,x^\star, n)\in \cL'$. Then $na+x=0$ implies $x=sb$ and $n=-sm$ for some $s\in \ZZ$. We thus obtain
$(na+x,x^\star,n)= (0,sb^\star,-sm)$, which violates injectivity. But injectivity might be restored from factoring by the subgroup $J=\langle (b^\star, -m)\rangle\subseteq H'$. Consider thus the factor group $H'/J$ equipped with the quotient topology. Take any $(h,n)\in H'$. Writing $n=r+sm$ for $r\in \{0,\ldots, m-1\}$ and $s\in \mathbb Z$ and noting $(-sb^\star, sm)\in J$, we infer $(h,n)+J=(h+sb^\star,r)+J$. This means that the factor group $H'/J$ is group-theoretically isomorphic to $(H\times \ZZ_m, \oplus)$, where
$$
(h_1,r_1) \oplus (h_2,r_2)= \left\{
\begin{array}{cc}
(h_1+h_2, r_1+r_2)  & \mbox{ if } r_1+r_2 <m  \\
(h_1+h_2+b^\star, r_1+r_2-m)  & \mbox{ if } r_1+r_2  \geq m  \\
\end{array}
\right. \ .
$$
Moreover, under the above identification, the induced topology on $H\times \ZZ_m$  is the product topology.

\medskip

The previous discussion suggests that we might obtain a translation cut-and-project scheme for the general case by a factor group construction. Quotient cut-and-project schemes have been intensively studied in \cite[Sec.~6]{KR19} by factoring with compact subgroups. Here the situation is different as $J\subseteq H'$ may not be compact. Whereas most arguments from \cite{KR19} apply in the present case, uniform discreteness of the quotient lattice has to be shown separately. For the ease of the reader, we provide complete arguments.

\begin{prop}[extended quotient CPS]\label{prop2} Let $(G, H, \cL)$ be a cut-and-project scheme. Consider nonzero $a \in G$. If $\langle a \rangle\cap \pi_G(\cL)\ne \{0\}$, let $m \in \NN$ be minimal so that $b=ma \in \pi_G(\cL)$. Otherwise set $m=0$ and $b=0$.  Consider the group $J=\langle  (b^\star, -m)\rangle \subseteq H\times \ZZ$ equipped with the discrete topology. Define $H'=(H\times \ZZ)/J$ and $\cL'=(\cL\times \{0\})+\langle (a,0,1)\rangle + (\{0\}\times J)\subseteq G\times H'$.
Then the following hold.
\begin{itemize}
  \item[(a)] $(G, H', \cL')$ is a cut-and-project scheme.  If $\cL$ is countable, then so is $\cL'$.

   \item[(b)] For $W\subseteq H$ and $n\in \ZZ$ we have $na+\Lambda_W=\Lambda'_{W'}$, where $W'=(W\times \{n\})+J$.
   \item[(c)] The map $x \mapsto (x,0)+J$ is a continuous open embedding of $H$ into $H'$.
    \item[(d)] The window $W'=((W\times \{n\})+J)\subseteq H'$ inherits properties $(T)$ and $(M)$ from $W\subseteq H$. If $W$ is measurable, then $W'$ is measurable.
\end{itemize}
\end{prop}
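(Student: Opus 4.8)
\textbf{Proof strategy for Proposition~\ref{prop2}.}
The plan is to mirror the argument for Proposition~\ref{prop1}, but working in the quotient $H'=(H\times\ZZ)/J$. Throughout, write elements of $H\times\ZZ$ additively and let $q\colon H\times\ZZ\to H'$ be the quotient map. The key preliminary observation, already recorded in Section~\ref{rem:notinj}, is that $H'$ is group-theoretically isomorphic to $(H\times\ZZ_m,\oplus)$ with the product topology (interpreting $m=0$ as the trivial case $J=\{0\}$, where $H'=H\times\ZZ$ and the statement reduces to Proposition~\ref{prop1}). I would state this at the outset so that topological questions about $H'$ can be answered on $H\times\ZZ_m$, where $H$ sits as the open-and-closed subgroup $H\times\{0\}$; this immediately gives part (c), since $x\mapsto q(x,0)$ is the composition of the (continuous, open) inclusion $H\hookrightarrow H\times\ZZ$ with $q$ restricted to the open set $H\times\{0\}$, on which $q$ is injective (as $J\cap(H\times\{0\})=\{0\}$ by minimality of $m$) and open.

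For part (a), countability of $\cL'$ is immediate from countability of $\cL$ and of $J\cong\ZZ$. That $\cL'$ is a subgroup of $G\times H'$ is a direct check. To see it is cocompact: pick compact $K\subseteq G\times H$ with $\cL+K=G\times H$; then as in Proposition~\ref{prop1} the set $(\cL\times\{0\})+\langle(a,0,1)\rangle$ already covers $G\times H\times\ZZ$, so its image $\cL'$ satisfies $\cL'+q(K\times\{0\})=G\times H'$, and $q(K\times\{0\})$ is compact. For uniform discreteness — which is the delicate point flagged in the text — I would argue as follows. It suffices to exhibit a zero-neighborhood $U'\subseteq G\times H'$ meeting $\cL'$ only in $0$. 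Start from a zero-neighborhood $U\subseteq G\times H$ with $\cL\cap U=\{0\}$, shrink it so that $U=U_G\times U_H$ with $U_G$ a zero-neighborhood in $G$ and $U_H$ a zero-neighborhood in $H$, and consider $U'=U_G\times q(U_H\times\{0\})$, which is open because $q$ is an open map and $U_H\times\{0\}$ is open in $H\times\ZZ$. A general element of $\cL'$ has the form $\ell'=(na+x,\,q(x^\star,n))$ with $x\in L$, $n\in\ZZ$; using the isomorphism $H'\cong H\times\ZZ_m$, write $n=r+sm$ and reduce to $q(x^\star,n)=(x^\star+sb^\star,\,r)$ in $H\times\ZZ_m$. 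If $\ell'\in U'$ then $r=0$, i.e.\ $m\mid n$, so $na\in\langle b\rangle\subseteq\pi_G(\cL)$; writing $n=sm$, the first coordinate is $sb+x\in U_G$ and $sb+x\in L$, while the internal coordinate is $x^\star+sb^\star=(sb+x)^\star$, which lies in $U_H$. Hence $(sb+x,(sb+x)^\star)\in\cL\cap U=\{0\}$, forcing $sb+x=0$ and then $x^\star+sb^\star=0$, $r=0$; thus $\ell'=0$. (Here I am using that $b=ma\in\pi_G(\cL)=L$, so $b^\star$ is defined and $(sb+x)^\star=sb^\star+x^\star$ by additivity of the star map on $L$.) Finally, injectivity of $\pi^G|_{\cL'}$ is the $n=0$, $x=0$ special case of this computation, and denseness of $\pi^{H'}(\cL')$ in $H'$ follows exactly as in Proposition~\ref{prop1}: any nonempty open subset of $H'$ contains $q(V\times\{n\})$ for some nonempty open $V\subseteq H$ and $n\in\ZZ$, and choosing $\ell\in\cL$ with $\pi^H(\ell)\in V$ gives $q(\pi^H(\ell),n)=\pi^{H'}((\ell,0)+n(a,0,1))\in\pi^{H'}(\cL')$ in that set.

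Part (b) is a direct computation paralleling Proposition~\ref{prop1}(b): with $W'=q(W\times\{n\})$,
\begin{displaymath}
\Lambda'_{W'}=\pi^G\bigl(\cL'\cap(G\times W')\bigr)=\pi^G\bigl(((\cL\times\{0\})+n(a,0,1))\cap(G\times W\times\{n\})\bigr)+\text{(}J\text{-terms)}=\Lambda_W+na,
\end{displaymath}
where one checks that adding elements of $\{0\}\times J$ and passing to the quotient does not enlarge the preimage beyond the $n$-th slice, precisely because $q^{-1}(W')\cap(H\times\{n\})=W\times\{n\}$ (again using $J\cap(H\times\{0\})=\{0\}$ and the slice structure of $H\times\ZZ_m$). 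For part (d), the topological statements in $(T)$ transfer because, via $H'\cong H\times\ZZ_m$, the set $W'$ corresponds to $W\times\{\bar n\}$ in one of the finitely many clopen copies of $H$, so precompactness, existence of interior points, and topological regularity ($\overline{W'}=\overline{(W')^\circ}$) are inherited coordinate-wise from $W$; measure-theoretic regularity follows since $\partial W'$ corresponds to $\partial W\times\{\bar n\}$, and $H'$ carries a Haar measure that restricts on each clopen copy of $H$ to a Haar measure on $H$, so $m_{H'}(\partial W')=m_H(\partial W)=0$. Measurability of $W'$ is likewise inherited since $q\rstr{H\times\{n\}}$ is a homeomorphism onto its open image. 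The one genuine obstacle is the uniform-discreteness argument above, where the non-compactness of $J$ means one cannot simply quote the quotient-CPS results of \cite[Sec.~6]{KR19}; the point is to use minimality of $m$ (equivalently $J\cap(H\times\{0\})=\{0\}$ and the explicit $H\times\ZZ_m$ model) to control the $\ZZ_m$-component and thereby reduce back to uniform discreteness of $\cL$ in $G\times H$.
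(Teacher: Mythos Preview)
Your proposal is correct and follows essentially the same approach as the paper's proof: both establish the lattice properties of $\cL'$ by lifting to $G\times H\times\ZZ$ and controlling the $\ZZ$-component modulo $m$, and both derive (c) and (d) from the fact that $H$ embeds as an open-and-closed subgroup of $H'$. The main presentational difference is that you work systematically in the explicit model $H'\cong(H\times\ZZ_m,\oplus)$ set up in Section~\ref{rem:notinj}, whereas the paper computes more directly with cosets; your uniform-discreteness argument (forcing $r=0$ in $\ZZ_m$, then reducing to $(sb+x,(sb+x)^\star)\in\cL\cap U$) is exactly the paper's chain of equalities rewritten in those coordinates. Two small points: your sentence ``injectivity of $\pi^G|_{\cL'}$ is the $n=0$, $x=0$ special case of this computation'' is phrased backwards (those are the desired \emph{conclusions}, not hypotheses; one starts from $na+x=0$ and deduces $m\mid n$ via $na\in\langle a\rangle\cap L$, then proceeds as in your discreteness argument), and in (d) the set $W'$ corresponds under the $H\times\ZZ_m$ model to a \emph{translate} $(W+sb^\star)\times\{\bar n\}$ rather than literally $W\times\{\bar n\}$ --- harmless, since all properties in $(T)$ and $(M)$ are translation invariant.
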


\begin{remark}
Note that Proposition~\ref{prop1} is a special case of Proposition~\ref{prop2} with trivial group $J$. This justifies the same notation for the two cut-and-project schemes.
\end{remark}

\begin{proof}
We use the canonical identification $(G\times H\times \ZZ)/(\{0\}\times J)\cong G\times H'$, compare \cite[Sec.~6]{KR19}.

\noindent ``(a)''
By definition, countability of $\cL'$ follows from countability of $\cL$. We start by showing that $\cL'$ is a lattice in $G \times H'$. Note first that $\cL'$ is a group as $\cL'=\cL+\langle (a,0,1)\rangle +(\{0\}\times J)$ and as $\cL$ is a group. The group $\cL'$ is also relatively dense: Take $K\in G\times H$ such that $\cL +K=G\times H$, and define $K'=(K\times \{0\})+(\{0\}\times J)$. We then have
\begin{displaymath}
\begin{split}
\cL'+K'&=(\cL+\langle (a,0,1)\rangle+(\{0\}\times J))+((K\times \{0\})+(\{0\}\times J)) \\
&= \cL +\langle (a,0,1)\rangle+(K\times \{0\})+(\{0\}\times J)\\
&=  (G\times H\times \{0\}) +\langle (a,0,1)\rangle+(\{0\}\times J)
=G\times H'\ .
\end{split}
\end{displaymath}

\noindent Next, we prove uniform discreteness of $\cL'$. Since $\cL$ is a lattice in $G \times H$ there exists a zero neighborhood  $U \subseteq G\times H$ such that $\cL \cap U = \{(0,0) \}$. Define $U'\in G\times H'$ by $U'=(U\times\{0\})+(\{0\}\times J)$. Recalling $J=\langle  (b^\star, -m)\rangle$ and $(ma,(ma)^\star)=(b,b^\star)\in\cL$, we then have
\begin{displaymath}
\begin{split}
\cL'\cap U' &= ((\cL\times \{0\})+ (na+x,x^\star,n) + (\{0\}\times J)) \cap ((U\times\{0\})+(\{0\}\times J)) \\
&=  ((\cL\times \{0\})+ \langle m(a,0,1)\rangle + (\{0\}\times J)) \cap ((U\times\{0\})+(\{0\}\times J)) \\
&=  ((\cL\times \{0\})+ (\{0\}\times J) + (\{0\}\times J)) \cap ((U\times\{0\})+(\{0\}\times J))  \\
&=  ((\cL\times \{0\})+ (\{0\}\times J)) \cap ((U\times\{0\})+(\{0\}\times J))\\
&=  (((\cL\cap U)\times\{0\})+(\{0\}\times J)) =  \{0\}\times J \ .
\end{split}
\end{displaymath}
In the second equality, we used that the $\ZZ$-component is restricted to be an integer multiple of $m$. In the third equality, we used $sm(a,0,1)+(\cL\times \{0\})=s(0,-b^\star,m)+(\cL\times \{0\})$ for any $s\in \ZZ$.
The above proves that $\cL'$ is a lattice in $G \times H'$.

\smallskip

\noindent We show that $\cL'$ projects injectively to $G$. Let $\ell'\in\cL'$ such that $\pi^G(\ell')=0$. Write $\ell'= (na+x,x^\star,n) +(\{0\}\times J)$ for $(x,x^\star)\in\cL$ and $n\in\ZZ$. Then $0=\pi^G(\ell')=na+x$. By the remark at the beginning of Section~\ref{rem:notinj}  we thus have $(na+x,x^\star,n)\in (0\times J)$, which implies $\ell'=(\{0\}\times J)$.

\smallskip

\noindent We also have that $\pi^{H'}(\cL')$ is dense in $H'$. Indeed we have
\begin{displaymath}
\begin{split}
\overline{\pi^{H'}(\cL')}&=\overline{\pi^{H'}(\cL+ \langle (a,0,1)\rangle +(\{0\}\times J))}=\overline{\pi^{(H\times \ZZ)}(\cL +  \langle (a,0,1)\rangle )+J}\\
&\supseteq\overline{\pi^{(H\times \ZZ)}(\cL + \langle (a,0,1)\rangle)}+J=(H\times \ZZ)+J=H' \ ,
\end{split}
\end{displaymath}
where we used continuity of the quotient map for the above inclusion.
This proves (a).

\noindent ``(b)'' follows by direct calculation. Let $W'=((W\times \{n\})+J)\in H'$. Then
\begin{displaymath}
\begin{split}
\pi^G(\cL&'\cap (G\times W'))\\
&= \pi^G(((\cL\times \{0\})+\langle (a,0,1)\rangle) +(\{0\}\times J)) \cap (G\times ((W\times \{n\})+J)))\\
&= \pi^G(((\cL\times \{0\})+n (a,0,1)+(\{0\}\times J)) \cap (G\times ((W\times \{n\})+J))) \\
&= na+\pi^G(((\cL\times \{0\})+(\{0\}\times J)) \cap (G\times ((W\times \{0\})+J))) \\
&=na+ \pi^G(\cL\cap (G\times W))=na + \Lambda_W \ .
\end{split}
\end{displaymath}
Here the third equation follows using arguments similar to the above proof uniform discreteness, and the fifth equation uses the map $\ell\mapsto  \ell \times\{0\}+\{0\}\times J$, which yields a bijection between the corresponding subsets of $G\times H$ and $G\times H'$.

\noindent ``(c)'' Consider the canonical group homomorphism $f:H\to H'$ given by $h\mapsto (h,0)+J$. Since $\{ 0\}$ is open in $\ZZ$, it is immediate that the map $H \ni x \mapsto (x,0) \in H \times  \{0\}$ is a continuous embedding of $H$ into $H \times \ZZ$. Next, the canonical projection $\pi : H \times \ZZ \to H \times \ZZ /J$ is a continuous group homomorphism which is an open mapping \cite[p.~59]{RS00}. Therefore, the map $f$ is a continuous open group homomorphism.
Finally, it follows immediately from the definition of $J$ that the restriction of $\pi$ to $H \times \{0 \}$ is one to one, which gives that $f$ is one to one and hence an embedding.

\noindent ``(d)'' By ``(c)'', $f(H)\subseteq H'$ is homeomorphic to $H$ and open in $H'$. Thus if $W$ is precompact, then $f(W)$ is precompact in $f(H)$, which by definition of the trace topology implies that $f(W)$ is precompact in $H'$.
Similar arguments apply to the case to $W$ having non-empty interior and $W$ being topologically regular. Also note that $f(W)$ is measurable for measurable $W$, as the Borel $\sigma$-algebra on $f(H)$ is contained in the Borel $\sigma$-algebra on $H'$.
Finally observe $\partial (f(W)) \subseteq f(\partial W)$, which holds as $f$ is open and closed (for the latter statement note that $f^{-1}|_{f(H)}$ is continuous). Hence $m_{H'}(\partial f(W))\le m_{H'}(f(\partial W))$, where $m_{H'}$ is a Haar measure on $H'$. Noting that $A\mapsto m_{H'}(f(A))$ defines a Haar measure on $H$ as $m_{H'}(f(H))>0$, measure-theoretic regularity of $f(W)$ follows from measure-theoretic regularity of $W$.
\end{proof}

We complete the section by  proving Theorem~\ref{thm:shift}.

\begin{proof}[Proof of Theorem~\ref{thm:shift}]
In the notation from the proof of Proposition~\ref{prop2}, since $f : H \to H'$ is an embedding, we can assume without loss of generality that $H$ is a subgroup of $H'$.

\noindent ``(a)'' and ``(b)'' follow now from the previous result.

\noindent ``(c)'' follows immediately from the definition of $J$, and ``(d)'' is obvious.

\noindent ``(e)'' follows from Proposition~\ref{prop2} (d).

\noindent ``(f)'' is an immediate consequence of the construction of $H'$, but also follows alternatively from the fact that if $H$ is metrisable, then $H$ ia a metrisable neighborhood of zero in $H'$ and hence $H'$ is metrisable.
\end{proof}

\section{Torus parametrisation and the almost model set hull}

We start with the following preparatory lemma, which generalises previous results \cite{sch00, BLM, KR18}.

\begin{lemma}\label{lem:tp}
Fix some cut-and-project scheme $(G, H, \cL)$, and take open $U\subseteq H$ and compact $W\subseteq H$ such that $U\subseteq W$. Consider any $(U,W)$-almost model set $\Gamma\subseteq G$ . Let $(s_\alpha)$ be a net in $G$ such that
$(s_{\alpha}+\Gamma)$ converges vaguely to some point set $\Gamma'$ and that $((-s_\alpha, 0)+\cL)$ converges to some $(s,t)+\cL\in   (G \times H)/\cL$.
Then $\Lambda_{U+t} \subseteq s+\Gamma' \subseteq \Lambda_{W+t}$.
\end{lemma}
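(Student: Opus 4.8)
The plan is to work with convergence of point sets in the vague topology, which for uniformly discrete sets amounts to local matching: $s_\alpha+\Gamma\to\Gamma'$ means that for every compact $K\subseteq G$ one has $(s_\alpha+\Gamma)\cap K=\Gamma'\cap K$ for all sufficiently large $\alpha$, after possibly shrinking $K$ slightly at its boundary. The sets $s_\alpha+\Gamma$ are uniformly discrete uniformly in $\alpha$ (since $\Lambda_U\subseteq\Gamma\subseteq\Lambda_W$ and $\Lambda_W$ is uniformly discrete, being a weak model set with precompact window), so $\Gamma'$ is uniformly discrete as well and the vague limit is genuinely a pointwise local limit. I would also record that, writing $x_\alpha=(-s_\alpha,0)+\cL\to(s,t)+\cL$ in the compact group $(G\times H)/\cL$, one can lift to representatives: there exist $\ell_\alpha\in\cL$ with $(-s_\alpha,0)+\ell_\alpha\to(s,t)$ in $G\times H$, equivalently $-s_\alpha+\pi^G(\ell_\alpha)\to s$ and $\pi^H(\ell_\alpha)\to t$.

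\textbf{Key steps.} The two inclusions $\Lambda_{U+t}\subseteq s+\Gamma'$ and $s+\Gamma'\subseteq\Lambda_{W+t}$ are proved separately, and are essentially symmetric modulo the open/closed distinction.

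For the first inclusion, take $g\in\Lambda_{U+t}$, i.e. $g\in L$ with $g^\star\in U+t$, so $g^\star-t\in U$. I would use the lift $\ell_\alpha\in\cL$ above. Consider the points $h_\alpha:=g+\pi^G(\ell_\alpha)\in L$ (here one must check $\pi^G(\ell_\alpha)\in L$, which is immediate). Then $h_\alpha^\star=g^\star+\pi^H(\ell_\alpha)\to g^\star+t$... wait, the signs need care: since $\pi^H(\ell_\alpha)\to t$ and we want to land in $U$, I actually track $g+\pi^G(\ell_\alpha)$ and compute its star-value $g^\star+\pi^H(\ell_\alpha)$; but what I want is a point of $\Gamma$ near $s_\alpha^{-1}$-translate of $g$. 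Let me instead argue directly: the point $g-s_\alpha$ should be compared with $\Gamma$ translated. Concretely, set $g_\alpha:=g+\pi^G(\ell_\alpha)$. Since $g\in L$ and $\pi^G(\ell_\alpha)\in L$ we get $g_\alpha\in L$ with $g_\alpha^\star=g^\star+\pi^H(\ell_\alpha)$. Now $\pi^H(\ell_\alpha)\to t$, hence $g_\alpha^\star\to g^\star+t$; but $g^\star-t\in U$, so this is the wrong translate — the clean statement is obtained by replacing $t$ with $-t$ in the lift, i.e. choosing $\ell_\alpha$ so that $\pi^H(\ell_\alpha)\to -t$ (equivalently tracking $(s_\alpha,0)+\cL$), after which $g_\alpha^\star=g^\star+\pi^H(\ell_\alpha)\to g^\star-t\in U$. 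Since $U$ is open, $g_\alpha^\star\in U$ eventually, so $g_\alpha\in\Lambda_U\subseteq\Gamma$. Translating, $g_\alpha+s_\alpha\in s_\alpha+\Gamma$, wait — $g_\alpha\in\Gamma$ gives $s_\alpha+g_\alpha\in s_\alpha+\Gamma$, and $s_\alpha+g_\alpha=s_\alpha+g+\pi^G(\ell_\alpha)=g+(s_\alpha+\pi^G(\ell_\alpha))$. From the lift $(s_\alpha,0)+\ell_\alpha\to(s,t)$ with the corrected sign we get $s_\alpha+\pi^G(\ell_\alpha)\to s$, so $s_\alpha+g_\alpha\to g+s$. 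Hence the point $g+s$ is a limit of points of $s_\alpha+\Gamma$; by uniform discreteness and vague convergence, $g+s\in\Gamma'$, i.e. $g\in s+\Gamma'$ — wait, $g+s\in\Gamma'$ means $g\in\Gamma'-s=-s+\Gamma'$, so one gets $s+\Gamma'\ni g+s$... to match the claim $\Lambda_{U+t}\subseteq s+\Gamma'$ one rewrites: $g\in\Lambda_{U+t}$, and we showed $s+g\in\Gamma'$, hence $g\in -s+\Gamma'$? No: $s+g\in\Gamma'$ is the same as $g\in\Gamma'-s$. Hmm, the asserted inclusion has $s+\Gamma'$ on the right, so a point of $\Lambda_{U+t}$ should equal $s+\gamma'$ for $\gamma'\in\Gamma'$; equivalently $g-s\in\Gamma'$. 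So I should instead produce $g_\alpha\in\Gamma$ with $s_\alpha+g_\alpha\to g-s$; this is arranged by the same computation with the roles of the two lifts of $(s,t)+\cL$ interchanged once more. The upshot is a bookkeeping of signs which I would fix once and carry consistently; the mathematical content is exactly: \emph{open window $\Rightarrow$ eventually inside $\Rightarrow$ limit point lies in $\Gamma'$}.

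For the reverse inclusion $s+\Gamma'\subseteq\Lambda_{W+t}$, take $\gamma'\in\Gamma'$. By vague convergence there are $\gamma_\alpha\in\Gamma$ with $s_\alpha+\gamma_\alpha\to\gamma'$ (for large $\alpha$, $\gamma'$ sits in the window of local matching). Since $\gamma_\alpha\in\Gamma\subseteq\Lambda_W$ we have $\gamma_\alpha\in L$ and $\gamma_\alpha^\star\in W$. Now subtract the lattice lift: $\gamma_\alpha-\pi^G(\ell_\alpha)\in L$ has star-value $\gamma_\alpha^\star-\pi^H(\ell_\alpha)$, and with the appropriate sign convention $-\pi^H(\ell_\alpha)\to t$, so this star-value has all its limit points in $W+t$ — here I need that $\{\gamma_\alpha^\star\}$ itself stays in the compact set $W$, hence has convergent subnets, and combined with $s_\alpha+\gamma_\alpha\to\gamma'$ (which pins down $\gamma_\alpha-\pi^G(\ell_\alpha)\to\gamma'-s$, a fixed point of $L$) the sequence $\gamma_\alpha-\pi^G(\ell_\alpha)$ is eventually constant equal to some $g\in L$, so $g^\star=\lim(\gamma_\alpha^\star-\pi^H(\ell_\alpha))\in\overline{W}+t=W+t$ since $W$ is compact. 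Thus $g\in\Lambda_{W+t}$ and $g=\gamma'-s$, i.e. $\gamma'\in s+\Lambda_{W+t}$... again modulo getting the final translate to read $s+\Gamma'\subseteq\Lambda_{W+t}$, which forces $\gamma'-s\in\Lambda_{W+t}$, consistent with $g=\gamma'-s$.

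\textbf{Main obstacle.} The genuinely delicate point is not the sign bookkeeping but justifying that the relevant sequences in $L$ are \emph{eventually constant}: this uses that $L$ is uniformly discrete in $G$ (as $\Lambda_W\subseteq L$ with $\Lambda_W$ uniformly discrete — but in fact one needs $L$ itself, or at least $\Lambda_W-\Lambda_W$, to be locally finite, which follows from uniform discreteness of $\Lambda_W$ only after noting $L=\Lambda_H$ need not be uniformly discrete in general!). The correct move is to work with $\Lambda_{W}$ rather than all of $L$: the star-values $\gamma_\alpha^\star$ lie in the compact $W$, and $\pi^G(\ell_\alpha)+\gamma_\alpha$ converges, so the \emph{pairs} $(\pi^G(\ell_\alpha)+\gamma_\alpha, \pi^H(\ell_\alpha)+\gamma_\alpha^\star)$ — wait, $\ell_\alpha+(\gamma_\alpha,\gamma_\alpha^\star)\in\cL$, and $\cL$ is uniformly discrete in $G\times H$; its $G\times H$-components converge, hence this $\cL$-valued sequence is eventually constant. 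That is the clean argument and it is exactly where discreteness-and-cocompactness of $\cL$ (as opposed to mere properties of $L$) is used; I would make sure to phrase the whole proof at the level of $\cL\subseteq G\times H$ and only project at the very end.
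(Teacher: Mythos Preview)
Your final approach---lift the torus convergence to $(-s_\alpha,0)+\ell_\alpha\to(s,t)$ with $\ell_\alpha\in\cL$, and then exploit discreteness of $\cL\subseteq G\times H$ so that convergent $\cL$-valued nets are eventually constant---is correct and is essentially what the paper does for the inclusion $s+\Gamma'\subseteq\Lambda_{W+t}$ (the paper phrases it dually, computing in the compact torus $(G\times H)/\cL$ rather than lifting, but the content is identical). For the inclusion $\Lambda_{U+t}\subseteq s+\Gamma'$ the paper takes a more elaborate route: it first uses finite local complexity of $\Gamma$ (inherited from $\Lambda_W$) to upgrade vague convergence to \emph{exact} local agreement $(s_\alpha+\Gamma)\cap K=\Gamma'\cap K$ after a small $K$-independent correction to $s_\alpha$, and then argues via a fundamental-domain decomposition of $(x-s-s_\alpha,x^\star-t)$. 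Your direct argument---for $g\in\Lambda_{U+t}$ set $g_\alpha=g-\pi^G(\ell_\alpha)$, use $g_\alpha^\star\to g^\star-t\in U$ open to get $g_\alpha\in\Lambda_U\subseteq\Gamma$ eventually, and note $s_\alpha+g_\alpha\to g-s$---avoids both the FLC reduction and the fundamental domain, and is cleaner. However, you must then justify two facts you are using implicitly: (i) if $p_\alpha\in s_\alpha+\Gamma$ and $p_\alpha\to p$ then $p\in\Gamma'$; (ii) if $p\in\Gamma'$ then there exist $p_\alpha\in s_\alpha+\Gamma$ with $p_\alpha\to p$. Both follow by testing the Dirac combs against nonnegative $f\in C_c(G)$ supported in a uniform-discreteness neighbourhood, but your opening assertion that vague convergence already gives $(s_\alpha+\Gamma)\cap K=\Gamma'\cap K$ is not literally true and should be replaced by these two statements. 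Finally, the sign bookkeeping: with the lift above one has $\pi^G(\ell_\alpha)-s_\alpha\to s$ and $\pi^H(\ell_\alpha)\to t$, and with $g_\alpha=g-\pi^G(\ell_\alpha)$ everything lines up to give $g-s\in\Gamma'$, i.e.\ $g\in s+\Gamma'$; fix this once at the start and the repeated corrections in your write-up disappear.
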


\begin{proof}
Note first that if $\Gamma'= \varnothing$ then $\Gamma$ is not relatively dense, and hence neither is
$\Lambda_U$, which gives that $U=\varnothing$. Therefore, the inclusions trivially hold. We can thus focus on the case $\Gamma' \neq \varnothing$. Fix some open zero neighborhood $V$ so that $\Gamma$ is $V$-uniformly discrete, i.e, $(\Gamma+x)\cap V$ contains at most one point, for all $x\in G$. Fix some $a \in \Gamma'$. 
Since $\Gamma$ is a subset of a model set and thus has finite local complexity, vague convergence means that
for any compact $K \subseteq G$ there exist a net $(s_{\alpha,K}')$ converging to $0$, with $s_{\alpha,K}' \in V$ for all $\alpha$, and some index $\alpha_K$ such that
$(s_\alpha+s_{\alpha,K}' +\Gamma)\cap (K \cup \{ a \}) =\Gamma' \cap (K \cup \{ a \})$ for all $\alpha >\alpha_K$, compare \cite{BL}. In particular,
\[
a \in s_\alpha+s_{\alpha,K}' +\Gamma \Rightarrow s_{\alpha,K}' \in \left( s_\alpha+\Gamma-a\right) \cap V \,.
\]
The $V$-uniform discreteness of $\Gamma$ implies that there exists a unique element in $\left( s_\alpha+\Gamma-a\right) \cap V$ and therefore, $s_{\alpha,K}'$ is independent of $K$. This shows that there exists some a net $(s_{\alpha}')$ converging to $0$, such that, for all compact $K$ there exists some index $\alpha_K$ such that
$(s_\alpha+s_{\alpha}' +\Gamma)\cap (K \cup \{ a \}) =\Gamma' \cap (K \cup \{ a \}) $ for all $\alpha >\alpha_K$. In particular we have
\[
(s_\alpha+s_{\alpha}' +\Gamma)\cap K =\Gamma' \cap K 
\]
for all $\alpha > \alpha_K$.
Thus, by replacing $s_{\alpha}$ by $s_\alpha+s_{\alpha}'$, we may assume without loss of generality that $(s_\alpha+\Gamma)\cap K =\Gamma' \cap K$ for all $\alpha >\alpha_K$.

\smallskip

\noindent Let us first prove the second inclusion. Fix $x \in \Gamma'$ and pick some compact $K$ such that
$ x \in \Gamma' \cap K$.
Then $x \in (s_\alpha+\Gamma)\cap K \subseteq s_\alpha+ \Lambda_{W}$ for all $\alpha > \alpha_K$.
For any $\alpha > \alpha_K$ take some $x_{\alpha} \in \Lambda_{W}$ such that $x=s_\alpha+x_\alpha$.
By compactness of $W$, there exists some subnet $(x_{\beta})$ of $(x_{\alpha})$, such that $(x_{\beta}^\star)$ converges in $H$ to some $y \in W$. Therefore we have by assumption  on $(s_\alpha)$ that
\[
(x+s,y+t) +\cL = \lim_\alpha \, (x-s_\alpha, y)+\cL =\lim_\alpha \, (x_\alpha, y)+\cL = \lim_\beta \, (0, y-x_\beta^\star)+\cL = (0,0)+\cL \ ,
\]
where last claim follows from $\lim_\beta \,  (y-x_\beta^\star) =0$. This shows $(x+s,y+t) \in \cL $. Therefore $x+s \in \pi^G(\cL)$ and $(x+s)^\star = y+t \in W+t$, which proves the second inclusion.

\smallskip

\noindent For the first inclusion let $x \in \Lambda_{U+t}$. Then $x^\star -t \in U$. Fix a compact set $K\subseteq G$ such that $x+s$ is an interior point of $K$. Then $(s_\alpha+\Gamma)\cap K =\Gamma' \cap K$ for all $\alpha > \alpha_{K}$.
Now pick compact sets $K_1 \subseteq G$ and $K_2 \subseteq H$ such that $\cL +(K_1 \times K_2)=G \times H$. For any $\alpha>\alpha_K$ we can then write
\[
(x-s-s_\alpha, x^\star - t)=(k^1_\alpha , k^2_\alpha)+(l_\alpha, l_\alpha^\star) \ ,
\]
where $(k^1_\alpha, k^2_\alpha) \in K_1\times K_2$ and $(l_\alpha, l_\alpha^\star) \in \cL$. By compactness, there exists some subnet $((k^1_\beta , k^2_\beta))$ in $K_1 \times K_2$ that is convergent to some $(k_1,k_2) \in K_1 \times K_2$. Now $((k^1_\beta,k^2_\beta)+\cL) =  ((x-s-s_\beta, x^\star - t)+\cL)$ converges to $(k_1,k_2)+\cL$ and, by assumption, also to $(x,x^\star)+\cL=(0,0)+\cL$. Thus $k_1 \in \pi^G(\cL)$ and $k_2=k_1^\star$, and we can write
\[
(x-s-s_\beta+k_1-k^1_\beta, x^\star - t+k_2-k^2_\beta)=(l_\beta+k_1, l_\beta^\star+k_1^\star) \,.
\]
Recall that $(x^\star - t+k_2-k^2_\beta)$ converges to $x^\star -t \in U$ and that $U$ is open. Thus there exists some index $\beta_0$ such that $x^\star - t+k_2-k^2_\beta \in U$ for all $\beta > \beta_0$. Hence we have
\[
x-s+k_1-k^1_\beta \in s_\beta+ \Lambda_{U} \subseteq  s_\beta+\Gamma
\]
for all $\beta > \beta_0$. Moreover, since $x-s+k_1-k^1_\beta$ converges to $x-s$, and since $x-s$ is an interior point of $K$, there exists some $\beta_1 >\beta_0$ such that
\[
x-s+k_1-k^1_\beta \in (s_\beta+\Gamma) \cap K = \Gamma'\cap K  \subseteq \Gamma'
\]
for all $\beta >\beta_1$.
As $x-s+k_1-k^1_\beta \in \Gamma'$ converges to $x-s$ and $\Gamma'$ is closed, we get $x-s \in \Gamma'$, as claimed.
\end{proof}

The previous lemma has, in conjunction with Proposition~\ref{prop2}, the following consequence.

\begin{prop}\label{prop:hull-reg-is-reg} Let $\Lambda\subseteq G$ be a point set and let $\Gamma \in \XX(\Lambda)$.
\begin{itemize}
  \item[(a)] If $\Lambda$ is a (regular) model set, then $\Gamma$ is a (regular) model set.
  \item[(b)] If $\Lambda$ is an inter (regular) model set, then $\Gamma$ is an inter (regular) model set.
  \item[(c)] If $\Lambda$ is an almost (regular) model set, then $\Gamma$ is an almost (regular) model set.
\end{itemize}
\end{prop}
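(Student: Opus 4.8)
The plan is to deduce this from Lemma~\ref{lem:tp} by showing that every $\Gamma \in \XX(\Lambda)$ arises as the vague limit considered there, and then to translate the conclusion back into a genuine (unshifted) almost model set using the translation cut-and-project scheme of Proposition~\ref{prop2}. First I would fix the relevant data: in case (c), $\Lambda$ is a $(U,W)$-almost model set in some cut-and-project scheme $(G,H,\cL)$, with $U\subseteq H$ open, $W\subseteq H$ compact, $\Lambda_U\subseteq \Lambda\subseteq \Lambda_W$; cases (a) and (b) are special cases, taking $U=W^\circ$, $W$ compact with $\Lambda=\Lambda_W$ for (a), and $\Lambda_{W^\circ}\subseteq\Lambda\subseteq\Lambda_{\overline W}$ for (b). Since $\Lambda$ has finite local complexity (it lies in a model set), its hull $\XX(\Lambda)$ is compact in the vague topology, and any $\Gamma\in\XX(\Lambda)$ is the vague limit of a net $(s_\alpha+\Lambda)$ for some net $(s_\alpha)$ in $G$. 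Passing to a subnet, compactness of $(G\times H)/\cL$ lets us also assume that $((-s_\alpha,0)+\cL)$ converges to some $(s,t)+\cL$. Lemma~\ref{lem:tp} then yields $\Lambda_{U+t}\subseteq s+\Gamma\subseteq \Lambda_{W+t}$, i.e. $\Gamma$ is a \emph{shifted} $(U+t,W+t)$-almost model set (with the shift $s$), and $W+t$ is still compact, $U+t$ still open.

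The remaining step is to remove the shift $s$, i.e. to realise $s+\Gamma$ as an honest unshifted almost model set in a modified cut-and-project scheme, which is exactly what Proposition~\ref{prop2} (together with its proof structure behind Theorem~\ref{thm:shift}) provides. Concretely, apply Proposition~\ref{prop2} to the cut-and-project scheme $(G,H,\cL)$ and the element $a=s\in G$, obtaining a cut-and-project scheme $(G,H',\cL')$ with $H$ embedded as an open subgroup and some $b\in H'$ with $(s,b)\in\cL'$ and $\cL'\cap(G\times H)=\cL$, such that $\Lambda_{V}=\pi^G(\cL'\cap(G\times V))$ and $s+\Lambda_V=\pi^G(\cL'\cap(G\times(b+V)))$ for every $V\subseteq H$. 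Applying this with $V=U+t$ and $V=W+t$ gives, from $\Lambda_{U+t}\subseteq s+\Gamma\subseteq\Lambda_{W+t}$, the inclusions
\begin{displaymath}
\Lambda'_{b+(U+t)} = s+\Lambda_{U+t} \subseteq s+(s+\Gamma) \ ,
\end{displaymath}
wait — more carefully: since $\Lambda_{U+t}\subseteq s+\Gamma$, translating by $s$ is not what we want; instead one applies the scheme in the form $s+\Lambda_{U+t}\subseteq s+(s+\Gamma)$ is wrong, so rather: the element of the hull we wish to describe is $\Gamma$ itself, and $\Lambda_{U+t}\subseteq s+\Gamma\subseteq\Lambda_{W+t}$ rearranges to $(-s)+\Lambda_{U+t}\subseteq\Gamma\subseteq(-s)+\Lambda_{W+t}$; so we instead apply Proposition~\ref{prop2} with $a=-s$, producing $b'\in H'$ with $(-s,b')\in\cL'$ and $(-s)+\Lambda_V=\pi^G(\cL'\cap(G\times(b'+V)))=\Lambda'_{b'+V}$, whence
\begin{displaymath}
\Lambda'_{b'+(U+t)} \subseteq \Gamma \subseteq \Lambda'_{b'+(W+t)} \ .
\end{displaymath}
Now $b'+(U+t)$ is open in $H'$ (translation is a homeomorphism, and $U+t$ is open in $H$, which is open in $H'$) and $b'+(W+t)$ is compact in $H'$; moreover $b'+(W+t)\setminus (b'+(U+t))$ is the translate of $(W+t)\setminus(U+t)=W\setminus U$ (translated by $t$), so in the regular case it has vanishing Haar measure in $H$, and by the argument in the proof of Proposition~\ref{prop2}(d) it also has vanishing Haar measure in $H'$. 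Hence $\Gamma$ is an almost model set — and an almost regular one when $\Lambda$ is — in the cut-and-project scheme $(G,H',\cL')$, proving (c). For (b), note that if $U=W^\circ$ then $b'+(U+t)=b'+(W^\circ+t)=(b'+(W+t))^\circ$ since translation is a homeomorphism and $W^\circ+t=(W+t)^\circ$, so $\Gamma$ is even an inter model set with window $b'+(W+t)$; and in (a) one additionally has $\Gamma=\Lambda'_{b'+(W+t)}$ exactly when $\Lambda=\Lambda_W$ and the boundary behaves well — but the cleanest route for (a) is to observe it follows from Proposition~\ref{prop:hull-reg-is-reg} being applied first in the form already known, or simply: a model set is an inter model set with $\Lambda_{W^\circ}\subseteq\Lambda_W\subseteq\Lambda_{\overline W}$, apply (b) to get $\Gamma$ is an inter model set, then invoke the separately proved Proposition~\ref{prop:hull-reg-is-reg}(a) — so in the write-up I would prove (c) in full and then extract (b) and (a) as the special cases $U=W^\circ$ (with $W$ replaced by $\overline W$) and $U=W^\circ,\ \Lambda=\Lambda_W$.

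The main obstacle, I expect, is the bookkeeping at the interface between Lemma~\ref{lem:tp} and Proposition~\ref{prop2}: Lemma~\ref{lem:tp} is stated with the lattice converging to $(s,t)+\cL$ and produces $\Lambda_{U+t}\subseteq s+\Gamma\subseteq\Lambda_{W+t}$, so one must carefully track that it is $\Gamma$ (not $s+\Gamma$) that we need to exhibit as an unshifted almost model set, and therefore that the translation cut-and-project scheme must be built for the shift $-s$ rather than $s$; and one must check that the window translate $b'+(W+t)\subseteq H'$ genuinely inherits compactness and (in the regular case) measure-theoretic smallness of $W\setminus U$ across the embedding $H\hookrightarrow H'$. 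All of these are handled by parts (c) and (d) of Proposition~\ref{prop2}, so no new ideas are needed — the work is purely in assembling the pieces correctly. One should also note that $\XX(\Lambda)$ being compact requires $\Lambda$ to be, say, a Delone set of finite local complexity, which holds here since $\Lambda\subseteq\Lambda_W$ with $\Lambda_W$ a model set (uniformly discrete) and $\Lambda\supseteq\Lambda_U$ (relatively dense, when $U\neq\varnothing$; the case $U=\varnothing$ forcing $\Lambda=\varnothing$ up to the inclusions is trivial and handled as in the proof of Lemma~\ref{lem:tp}).
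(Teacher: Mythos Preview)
Your treatment of parts (b) and (c) is correct and matches the paper's argument: apply Lemma~\ref{lem:tp} to obtain $\Lambda_{U+t}\subseteq s+\Gamma\subseteq\Lambda_{W+t}$, then use Proposition~\ref{prop2} (equivalently Theorem~\ref{thm:shift}) with $a=-s$ to remove the shift. The bookkeeping you worry about is fine once one observes that $H$ sits as an open subgroup of $H'$, so interiors, closures, compactness and null boundary all transfer along translation by $b'$.

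Part (a), however, has a genuine gap. Reducing (a) to (b) only shows that $\Gamma$ is an \emph{inter} model set, not a model set: from $\Lambda_{W^\circ+t}\subseteq s+\Gamma\subseteq\Lambda_{\overline W+t}$ you cannot conclude $s+\Gamma=\Lambda_{W'}$ for any window $W'$ without further input, and your two suggested shortcuts are circular (each invokes the very statement (a) you are proving). The equality $\Gamma=\Lambda'_{b'+(W+t)}$ is false in general --- hull elements over non-generic parameters need not be full projection sets.

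The missing idea, which the paper supplies, is a periodicity argument. Since $\Lambda=\Lambda_W$ is $\ker(\star)$-periodic, so is every translate $s_\alpha+\Lambda$, and hence so is every vague limit $s+\Gamma$. One then sets $W'=(t+W^\circ)\cup(s+\Gamma)^\star$ and checks $s+\Gamma=\Lambda_{W'}$ directly: the inclusion $s+\Gamma\subseteq\Lambda_{W'}$ is immediate; for the reverse, if $g^\star\in t+W^\circ$ then $g\in\Lambda_{t+W^\circ}\subseteq s+\Gamma$, while if $g^\star=\gamma^\star$ for some $\gamma\in s+\Gamma$ then $g-\gamma\in\ker(\star)$ and $\ker(\star)$-periodicity of $s+\Gamma$ gives $g\in s+\Gamma$. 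This produces an honest model set window (measure-theoretically regular when $W$ is), after which Theorem~\ref{thm:shift} removes the shift $s$ as in your argument for (b) and (c).
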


\begin{proof}
``(a)'': Let $(G,H,\cL)$ be a cut-and-project scheme and let $W\subseteq H$ be measure-theoretically regular such that $\Gamma=\Lambda_{W}$. Then there exists $(s,t)\in G\times H$ such that $\Lambda_{t+W^\circ} \subseteq s+\Gamma \subseteq \Lambda_{t+\overline{W}}$.
As  $s+\Gamma \subseteq \Lambda_{t+\overline{W}} \subseteq \pi^G(\cL)$, we can define
\begin{displaymath}
W'= (t+W^\circ) \cup (s+\Gamma)^\star \ .
\end{displaymath}
Clearly $W'$ has interior points by assumption on $W$. Note that $W'$ is measure-theoretically regular, as $(t+W^\circ) \subseteq W' \subseteq (t+\overline{W})$ implies $\partial W' \subseteq t+ \partial W$. We show $s+\Gamma = \Lambda_{W'}$, which yields the claim of the proposition by Theorem~\ref{thm:shift}.

\noindent
The inclusion $s+\Gamma \subseteq \Lambda_{W'}$ is obvious. In order to prove $\Lambda_{W'}\subseteq s+\Gamma$, consider $g \in \Lambda_{W'}$ such that $g^\star \in W'$. If $g^\star \in t+W^\circ$, then clearly $g \in \Lambda_{t+W^\circ}\subseteq s+\Gamma$. If $g^\star \in (s+\Gamma)^\star$, then there exists some $\gamma \in s+\Gamma$ such that $g^\star = \gamma^\star$ and hence $\gamma-g \in \ker(\star)$.
Since $s+\Gamma \in  \XX(\Lambda)$, there exists a net $(s_\alpha)$ in $G$ such that $(s_\alpha+\Lambda)$ converges to $s+\Gamma$ in the vague topology.
As $\Lambda = \Lambda_{W}$ is $ \ker(\star)$-periodic, so is $s_\alpha+\Lambda$ for all $\alpha$. Thus also the limit $s+\Gamma$ is $ \ker(\star)$-periodic. But this implies $g\in s+\Gamma$, which proves the claim.

\noindent ``(b)'', ``(c)'': By Lemma~\ref{lem:tp}, if $\Lambda$ is an inter/almost (regular) model set then $\Gamma$ is a shifted  inter/almost (regular) model set. The claim follows from Proposition~\ref{prop2}.
\end{proof}

Under slightly stricter assumptions, we can conclude even more. Let us first prove the following lemma.

\begin{lemma} Fix some cut-and-project scheme $(G, H, \cL)$, and take some non-empty open $U \subseteq H$ and compact $W\subseteq H$ such that $U\subseteq W$.  Consider $\Gamma\subseteq G$ such that $\Lambda_{U} \subseteq \Gamma \subseteq \Lambda_{W}$. If $W \backslash U$ is nowhere dense, then there exist some $\Gamma' \in \XX(\Lambda)$ and some $t \in H$ such that $\Gamma'=\Lambda_{t+U} = \Lambda_{t+W}$. In particular, $\Gamma'$ is a model set.
\end{lemma}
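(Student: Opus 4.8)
The plan is to use Lemma~\ref{lem:tp} in order to locate a suitable limit point set $\Gamma'\in \XX(\Lambda)$, and then to argue that the nowhere density of $W\setminus U$ forces the two sandwiching projection sets to coincide. First I would note that $\Lambda=\Lambda_W$ is relatively dense (since $\Lambda_U\subseteq\Gamma\subseteq\Lambda_W$ and $\Lambda_U$ is relatively dense, because $U$ is non-empty open), so the orbit closure $\XX(\Lambda)$ is non-empty, and more importantly $\Gamma$ itself has finite local complexity and is relatively dense. Pick any net $(s_\alpha)$ in $G$ such that $((-s_\alpha,0)+\cL)$ converges to some $(s,t)+\cL$ in the compact group $(G\times H)/\cL$; passing to a subnet and using the local-complexity/relative-denseness of $s_\alpha+\Gamma$, we may also assume $(s_\alpha+\Gamma)$ converges vaguely to some $\Gamma''$, and then translating by $-s$ (which changes neither membership in $\XX(\Lambda)$ nor the model-set property) we may assume $s=0$, i.e.\ we obtain $\Gamma'':=\lim(s_\alpha+\Gamma)\in\XX(\Gamma)=\XX(\Lambda)$. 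By Lemma~\ref{lem:tp} this $\Gamma''$ satisfies $\Lambda_{t+U}\subseteq \Gamma''\subseteq \Lambda_{t+W}$.

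The one remaining ingredient is a choice of $t$ making the translated window \emph{generic}, i.e.\ $\partial(t+W)\cap \pi^G(\cL)=\varnothing$ (equivalently $t+\partial W$ avoids $L^\star$), which then gives $\Lambda_{t+U}=\Lambda_{t+W}=\Lambda_{t+\overline W}$ and hence $\Gamma'':=\Gamma'$ is the model set $\Lambda_{t+U}$. Here is where nowhere-density of $W\setminus U$ (and hence of $\partial W\subseteq\overline{W\setminus U}$) enters: the set of ``bad'' translation parameters is $\{t\in H: (t+\partial W)\cap L^\star\neq\varnothing\}=\bigcup_{\ell^\star\in L^\star}(\ell^\star-\partial W)$, a countable union of nowhere dense sets since $L$ is countable and $\partial W$ is nowhere dense, hence meagre in $H$ by the fact that $L^\star$ is dense. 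However we are not free to choose $t$ arbitrarily — $t$ is pinned down by the net $(s_\alpha)$. So instead I would work the other way: the \emph{set of admissible} $t$, i.e.\ $\{t: (t,\cdot)+\cL \text{ is a limit of }((-s_\alpha,0)+\cL)\text{ along some subnet of some orbit net}\}$, is all of $H$ modulo the relevant identifications, because $L^\star$ is dense in $H$ and for $\ell\in L$ one has $(-\ell,0)+\cL=(0,-\ell^\star)+\cL$; thus one may \emph{first} fix a generic $t_0$ from the comeagre set above, then choose $\ell_\alpha\in L$ with $\ell_\alpha^\star\to -t_0$, set $s_\alpha=\ell_\alpha$ (so $s_\alpha+\Gamma=\ell_\alpha+\Gamma$ stays in $\XX(\Gamma)$ and, after passing to a vaguely convergent subnet, has a limit $\Gamma'$ in $\XX(\Lambda)$), and apply Lemma~\ref{lem:tp} with this data to get $\Lambda_{t_0+U}\subseteq \Gamma'\subseteq\Lambda_{t_0+W}$ with $t_0$ generic. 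Finally genericity gives $\Lambda_{t_0+U}=\Lambda_{t_0+\overline W}$: indeed $\Lambda_{t_0+\overline W}\setminus\Lambda_{t_0+U}=\{g\in L: g^\star\in t_0+(\overline W\setminus U)\}$, and since $\overline W\setminus U\subseteq (\overline W\setminus W^\circ)\cup(W\setminus U)=\partial W\cup(W\setminus U)$ — wait, more carefully $\overline W\setminus U=(\overline W\setminus W)\cup (W\setminus U)\subseteq \partial W\cup(W\setminus U)$, and $W\setminus U$ is nowhere dense, so $\overline W\setminus U\subseteq\overline{W\setminus U}$ has empty interior and is disjoint from $L^\star$ by the choice of $t_0$ — hence the difference set is empty. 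Setting $t:=t_0$ and $\Gamma':=\Lambda_{t+U}$ completes the proof; that $\Gamma'$ is a model set is then immediate since $U$ is non-empty open and $t+\overline W$ is compact, with $t+U\subseteq t+W\subseteq t+\overline W$.

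The main obstacle I anticipate is the bookkeeping around \emph{which} $t$ one gets to control: one must be careful that choosing $s_\alpha=\ell_\alpha\in L$ produces, after passing to a convergent subnet of $(\ell_\alpha+\Gamma)$, a net to which Lemma~\ref{lem:tp} genuinely applies — that is, that $((-\ell_\alpha,0)+\cL)=((0,-\ell_\alpha^\star)+\cL)$ really converges to $(0,t_0)+\cL$ (true by $\ell_\alpha^\star\to -t_0$ and continuity of the quotient map) — and that the resulting inclusions are with respect to the \emph{same} $t_0$ and not some coset representative causing an off-by-a-lattice-vector discrepancy. A secondary point worth stating explicitly is that the vaguely convergent subnet of $(\ell_\alpha+\Gamma)$ exists: this uses that $\Gamma$ has finite local complexity (being a subset of the model set $\Lambda_W$), so its hull is vaguely compact, and that $\Gamma$ is relatively dense so the limit is not the empty pattern. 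Everything else is the Baire-category argument for genericity of $t_0$ plus the elementary computation that a generic translated compact window has $\Lambda_{t_0+U}=\Lambda_{t_0+\overline W}$, which is standard.
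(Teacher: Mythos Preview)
Your overall strategy coincides with the paper's: first isolate a translation parameter $t$ for which $(t+(W\setminus U))\cap L^\star=\varnothing$, then choose lattice elements $s_\alpha\in L$ with $s_\alpha^\star\to t$, pass to a vaguely convergent subnet of $(s_\alpha+\Gamma)$, and invoke Lemma~\ref{lem:tp} to sandwich the limit between $\Lambda_{t+U}$ and $\Lambda_{t+W}$, which then collapse. The paper does exactly this, first reducing to $H$ $\sigma$-compact (by replacing $H$ with the open subgroup generated by a precompact neighbourhood of $W$) and then quoting \cite[Prop.~8.3]{KR19} for the existence of such a $t$.

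There are, however, two genuine gaps in your genericity step. First, your Baire argument writes the bad set as a union over $L^\star$ and asserts meagreness ``since $L$ is countable''; but countability of $\cL$ (equivalently of $L^\star$) is equivalent to $\sigma$-compactness of $G\times H$, which is not assumed. This is precisely what the paper's reduction plus the cited \cite[Prop.~8.3]{KR19} handle. Second, you define ``generic'' via $\partial W$, yet what you need for $\Lambda_{t_0+U}=\Lambda_{t_0+W}$ is that $t_0+(W\setminus U)$ miss $L^\star$; since $U$ need not equal $W^\circ$, the set $W\setminus U$ may properly contain $\partial W$, so your choice of $t_0$ does not a priori deliver the disjointness you later invoke. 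The fix is immediate---run the argument with the nowhere dense closed set $\overline{W\setminus U}$ from the start---but as written the reasoning is inconsistent. Minor slips: $\partial(t+W)\cap\pi^G(\cL)$ should read $\pi^H(\cL)$; the coset identity is $(-\ell_\alpha,0)+\cL=(0,\ell_\alpha^\star)+\cL$ (no minus), so one wants $\ell_\alpha^\star\to t_0$; and $\XX(\Gamma)=\XX(\Lambda)$ is neither needed nor true in general---the limit lies in $\XX(\Gamma)$, which is all the lemma asserts (the ``$\Lambda$'' in the statement is a typo for $\Gamma$).
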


\begin{proof}
As usual, by eventually replacing $H$ by the subgroup generated by an open pre-compact set $K$ which contains $W$, we can assume without loss of generality that $H$ is compactly generated. Thus $H$ can be assumed $\sigma$-compact without loss of generality.
By \cite[Prop.~8.3]{KR19} there exists some $t \in H$ such that $(t+ (W \backslash U)) \cap \pi^{H}(\cL)= \varnothing$.
Now choose a net $((s_\alpha, s_\alpha^\star))$ in $\cL$ such that  $s_\alpha^\star \to t$, which is possible by denseness. By compactness, $(s_\alpha+\Gamma)$ has a subnet $(s_\beta+\Gamma)$ convergent to some $\Gamma' \in \XX(\Lambda)$.
Thus $(s_\beta+\Gamma)$ converges to $\Gamma'$, and $((-s_\beta, 0)+\cL)=((0, s_\beta^\star)+\cL)$ converges to $(0,t)+\cL$. Therefore Lemma~\ref{lem:tp} yields $\Lambda_{U+t} \subseteq \Gamma' \subseteq \Lambda_{W+t} $.
As $(t+ (W \backslash U)) \cap \pi^{H}(\cL)= \varnothing$, the claim follows.
\end{proof}

Since any set of zero Haar measure is nowhere dense, we have, in conjunction with Theorem~\ref{thm:shift}, the following result.

\begin{cor}\label{cor:hull-rms}
Let $\Gamma$ be an almost regular model set. Then there exists some $\Gamma' \in \XX(\Gamma)$ which is a regular model set. As a consequence, if $\Gamma$ is an inter regular model set, then there exists some $\Gamma' \in \XX(\Gamma)$ which is a regular model set.
\end{cor}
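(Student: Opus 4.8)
The plan is to prove Corollary~\ref{cor:hull-rms} by a direct reduction to the preceding lemma and then to Theorem~\ref{thm:shift}. First I would observe that, by definition, an almost regular model set $\Gamma$ comes with a cut-and-project scheme $(G,H,\cL)$ and sets $U\subseteq W\subseteq H$ with $U$ non-empty open, $W$ compact, $W\setminus U$ of zero Haar measure, and $\Lambda_U\subseteq\Gamma\subseteq\Lambda_W$. The only gap between this hypothesis and the hypothesis of the lemma immediately above is that the lemma asks for $W\setminus U$ to be \emph{nowhere dense}, whereas here we are given the stronger condition that $W\setminus U$ has vanishing Haar measure. So the first step is the one-line remark (already flagged in the text preceding the corollary) that a set of zero Haar measure has empty interior, hence is nowhere dense: if it contained a non-empty open set $O$, then $m_H(O)>0$ since Haar measure is positive on non-empty open sets, a contradiction. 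Thus the lemma applies verbatim to $(G,H,\cL)$, $U$, $W$ and $\Gamma$.

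Applying the lemma then yields some $\Gamma'\in\XX(\Gamma)$ (note $\XX(\Gamma)=\XX(\Lambda)$ in the lemma's notation, since $\Gamma$ is the point set whose hull is taken) and some $t\in H$ with $\Gamma'=\Lambda_{t+U}=\Lambda_{t+W}$. At this point $\Gamma'$ is a model set in the scheme $(G,H,\cL)$ with window $t+U$ (equivalently $t+W$), but to finish I must upgrade this to a \emph{regular} model set, i.e.\ produce a measure-theoretically regular window. Here I would take the window $W':=t+W$; it is compact (hence precompact), has non-empty interior since $t+U\subseteq (t+W)^\circ$ and $U\neq\varnothing$, and its boundary satisfies $\partial W'=\overline{t+W}\setminus(t+W)^\circ\subseteq (t+W)\setminus(t+U)=t+(W\setminus U)$, which has zero Haar measure by translation invariance of $m_H$. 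Hence $W'$ is measure-theoretically regular and $\Gamma'=\Lambda_{W'}$ is a regular model set. (Alternatively one may invoke Proposition~\ref{prop:hull-reg-is-reg}(a) or the measure-theoretic regularity argument already used in its proof.)

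For the ``as a consequence'' clause, recall from Remark~(a) after the definition of almost model sets that any inter regular model set is an almost regular model set; so if $\Gamma$ is an inter regular model set, the first part of the corollary already produces the desired $\Gamma'\in\XX(\Gamma)$ that is a regular model set, and there is nothing further to prove. The reference to Theorem~\ref{thm:shift} in the sentence preceding the corollary is only needed if one has worked in a modified cut-and-project scheme where the hull was realised after a shift; in the argument above the shift is absorbed into the window $t+W$ within the original scheme, so Theorem~\ref{thm:shift} is not strictly required, though it may be cited to normalise the scheme if preferred.

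I do not expect a genuine obstacle here: the corollary is essentially a packaging of the lemma together with the elementary fact ``zero Haar measure $\Rightarrow$ nowhere dense.'' The only point requiring a modicum of care is the verification that the window $t+W$ (rather than $t+U$) is the right choice to simultaneously secure precompactness, interior points, and measure-theoretic regularity — using $\Lambda_{t+U}=\Lambda_{t+W}$ to identify $\Gamma'$ with $\Lambda_{t+W}$ — but this is routine and parallels the window-repair construction in the proof of Proposition~\ref{prop:hull-reg-is-reg}(a).
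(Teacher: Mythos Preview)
Your approach is correct and matches the paper's: reduce to the preceding lemma via the implication ``zero Haar measure $\Rightarrow$ nowhere dense,'' then verify that the window $t+W$ is measure-theoretically regular. One small point to tighten: your sentence ``a set of zero Haar measure has empty interior, hence is nowhere dense'' is not valid as a general implication (think of $\QQ\subset\RR$); what makes it work here is that $W\setminus U$ is \emph{closed} (being the intersection of the compact set $W$ with the closed set $H\setminus U$), so it equals its own closure and empty interior does yield nowhere dense. With that word added your argument is complete, and your explicit verification that $\partial(t+W)\subseteq t+(W\setminus U)$ has zero measure is in fact more detailed than the paper, which leaves the regularity of the resulting model set implicit.
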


\medskip

We finish the section with a nice consequence for repetitive point sets. Recall that a point set $\Lambda$ of finite local complexity (such as a subset of a model set) is called repetitive if the set
\[
\{ t \in G : (-t+\Lambda) \cap K = \Lambda \cap K \}
\]
is relatively dense in $G$, for any compact set $K$. For point sets of finite local complexity, repetitivity of $\Lambda$ is equivalent to $\XX(\Lambda)$ being minimal, when equipped with its natural translation action \cite[Fact~3]{BLM}. By combining the results of this section, we get:

\begin{prop}\label{prop:flcrep} Let $\Gamma\subseteq G$ be a point set of finite local complexity that is repetitive. Then, the following are equivalent:
\begin{itemize}
  \item[(i)] $\Gamma$ is a regular model set.
  \item[(ii)] $\Gamma$ is an inter regular model set.
  \item[(iii)] $\Gamma$ is an almost regular model set.
\end{itemize}
\end{prop}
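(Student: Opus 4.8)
The plan is to prove the cycle of implications (i)$\Rightarrow$(ii)$\Rightarrow$(iii)$\Rightarrow$(i), where the first two implications are immediate from the definitions recorded in the Remark after the definition of almost model sets, so the only work is in (iii)$\Rightarrow$(i). Assume therefore that $\Gamma$ is an almost regular model set that is moreover repetitive and of finite local complexity. Since $\Gamma$ is repetitive of finite local complexity, its hull $\XX(\Gamma)$ is minimal under the translation action, so \emph{every} element of $\XX(\Gamma)$ has hull equal to $\XX(\Gamma)$ itself; in particular $\Gamma\in\XX(\Gamma')$ for every $\Gamma'\in\XX(\Gamma)$.

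The key step is to invoke Corollary~\ref{cor:hull-rms}: because $\Gamma$ is an almost regular model set, there exists some $\Gamma'\in\XX(\Gamma)$ which is a regular model set. By minimality we then have $\Gamma\in\XX(\Gamma')$, and Proposition~\ref{prop:hull-reg-is-reg}(a) tells us that every element of the hull of a regular model set is again a regular model set. Hence $\Gamma$ itself is a regular model set, which establishes (iii)$\Rightarrow$(i) and closes the cycle. One should note that $\Gamma'$ produced by Corollary~\ref{cor:hull-rms} lives in the same (possibly modified) cut-and-project scheme, and that Proposition~\ref{prop:hull-reg-is-reg}(a) already absorbs any further scheme modification via Theorem~\ref{thm:shift}, so no additional bookkeeping of cut-and-project schemes is needed here.

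I would write this up in three or four lines: first dispatch (i)$\Rightarrow$(ii) and (ii)$\Rightarrow$(iii) by citing the Remark; then for (iii)$\Rightarrow$(i) apply Corollary~\ref{cor:hull-rms} to get a regular model set $\Gamma'\in\XX(\Gamma)$, use minimality of $\XX(\Gamma)$ (equivalently repetitivity together with finite local complexity, via \cite[Fact~3]{BLM}) to deduce $\Gamma\in\XX(\Gamma')=\XX(\Gamma)$, and finally apply Proposition~\ref{prop:hull-reg-is-reg}(a) to conclude that $\Gamma$ is a regular model set.

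The main obstacle, such as it is, is purely one of making sure the minimality argument is stated correctly: one needs that $\Gamma'$ and $\Gamma$ generate the \emph{same} hull, which requires $\XX(\Gamma)$ to be minimal, and this is exactly where repetitivity and finite local complexity enter. There is no analytic difficulty; the proof is a short assembly of Corollary~\ref{cor:hull-rms}, the minimality characterisation of repetitivity, and Proposition~\ref{prop:hull-reg-is-reg}(a). If one wanted to be careful, one could additionally remark that each of (i), (ii), (iii) is obviously preserved when passing within the hull (again by Proposition~\ref{prop:hull-reg-is-reg}), so the equivalence is in fact an equivalence of properties of the whole hull.
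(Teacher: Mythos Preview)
Your proposal is correct and follows essentially the same route as the paper's proof: dispatch (i)$\Rightarrow$(ii)$\Rightarrow$(iii) as obvious, then for (iii)$\Rightarrow$(i) invoke Corollary~\ref{cor:hull-rms} to find a regular model set $\Gamma'\in\XX(\Gamma)$, use minimality (from repetitivity and finite local complexity) to get $\Gamma\in\XX(\Gamma')$, and apply Proposition~\ref{prop:hull-reg-is-reg}(a). The paper's write-up is indeed just two lines, exactly as you anticipated.
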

\begin{proof}
(i)$\Longrightarrow$(ii) $\Longrightarrow$ (iii) is obvious.

\noindent (iii)$\Longrightarrow$ (i): By Corollary~\ref{cor:hull-rms}, there exists some $\Gamma' \in \XX(\Gamma)$ which is a regular model set. By minimality and Proposition~\ref{prop:hull-reg-is-reg}, we get that $\Gamma \in \XX(\Gamma')$ is a regular model set.
\end{proof}

\section{Second countable internal spaces}

Consider a cut-and-project scheme for some almost (regular) model set. Here we show that the internal space may be assumed second countable. In fact we will prove a slightly stronger statement, which is summarised as follows.

\begin{prop}\label{mainsecond} Let $(G,H,\cL)$ be a cut-and-project scheme. Let $K\subseteq H$ be compact and let $U \subseteq K$ be measurable. Then there exist a cut-and-project scheme $(G, H', \cL')$ where $H'$ is second countable, a compact set $K'\subseteq H'$, an open set $U' \subseteq  K'$ and some positive constant $c$ such that
\begin{align*}
c \cdot m_H(U) \le m_{H'}(U')&\le m_{H'}(K')\leq c \cdot m_H(K)
\end{align*}
and $\Lambda'_{U'} \subseteq \Lambda^{}_U \subseteq \Lambda^{}_K \subseteq \Lambda'_{K'} $.
In particular, any almost (regular) model set is an almost (regular) model set in some cut-and-project scheme with second countable internal space.
\end{prop}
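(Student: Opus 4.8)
The plan is to realise the new, second countable internal space as a quotient $H' = H/N$ of $H$ by a suitably chosen compact subgroup $N$, and to push the lattice $\cL$ forward along the quotient map $q\colon H\to H'$. First I would reduce, by the now routine device used in the proofs above, to the case that $H$ is compactly generated, so that the structure theorem gives $H \cong \RR^a \times \ZZ^b \times \KK$ with $\KK$ compact; all the genuine work then concerns $\KK$. The structural inputs are: the closed subgroups $N \subseteq \KK$ with $\KK/N$ second countable form a neighbourhood basis of $0$ (they are directed downward under intersection and have trivial total intersection, so a compactness/finite-intersection argument against any given zero neighbourhood yields one inside it); and, when $G$ is $\sigma$-compact --- the case $G = \RR^d$ of interest --- the star-image set $\star(\Lambda_{K + N_0})$ is countable for a fixed reference subgroup $N_0$ with $\KK/N_0$ second countable, so one may take $N \subseteq N_0$ with $\KK/N$ second countable and with $q$ injective on $\star(\Lambda_{K+N_0})$ (choosing $N$ as the annihilator of a countable subgroup of $\widehat{\KK}$ separating this set). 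I would fix such an $N$ and set $H' := H/N$, $\cL' := (\mathrm{id}_G \times q)(\cL)$.

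I would then verify that $(G, H', \cL')$ is a cut-and-project scheme. Cocompactness of $\cL'$, injectivity of $\pi^G|_{\cL'}$ and density of $\pi^{H'}(\cL')$ are inherited from $(G, H, \cL)$ exactly as in the proof of Proposition~\ref{prop2}. The only delicate point is uniform discreteness of $\cL'$: since $N$ is compact, for a precompact open zero neighbourhood $V \subseteq H$ the set $\overline V + N$ is compact, hence $\Lambda_{\overline V + N}$ is uniformly discrete; this forbids nonzero $\ell \in L$ that lie in a fixed small zero neighbourhood of $G$ and have $\ell^\star \in \overline V + N$, which is precisely what is needed to produce a zero neighbourhood of $G \times H'$ meeting $\cL'$ only in $0$. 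Observe that $L' = L$ and $\star' = q \circ \star$ on $L$.

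For the windows I would take $K' := q(K)$, so $\Lambda_K \subseteq \Lambda'_{K'}$ is immediate, and $U' \subseteq K'$ an open set chosen so that $\Lambda'_{U'} \subseteq \Lambda_U$. The point of the injectivity of $q$ on $\star(\Lambda_{K + N_0})$ is that it makes the $q$-images of $\star(\Lambda_U)$ disjoint from those of $\star(\Lambda_{K+N} \setminus \Lambda_U)$ in $H'$, so one can choose $U'$ open inside $q(K)$ capturing enough of the former while avoiding the latter; this forces $\Lambda'_{U'} \subseteq \Lambda_U$. Normalising $m_{H'}$ by Weil's formula relative to the probability Haar measure on $N$ gives $m_{H'}(q(A)) = m_H(A + N)$ for measurable $A \subseteq H$, so that $m_{H'}(K') = m_H(K+N)$; the remaining task is to choose $N$ and $U'$ so that a single constant $c$ satisfies both $m_H(K+N) \le c\,m_H(K)$ and $m_{H'}(U') \ge c\, m_H(U)$, using that the second countable quotients of $\KK$ are cofinal enough to disturb the Haar measures of $K$ and $U$ arbitrarily little, and that $U'$ may be enlarged into the lattice-free regions adjacent to $U$. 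The degenerate case $m_H(U) = 0$ is handled by taking $U' = \varnothing$.

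The step I expect to be the main obstacle is precisely the last one: simultaneously meeting the two competing demands on $N$ --- coarse enough that it leaves the Haar measures of the two windows (essentially) unchanged, yet fine enough (second countable quotient, injective on the relevant star-images) to pin down $\Lambda'_{U'}$ --- together with the concrete construction of the open set $U'$; the rest is bookkeeping in the spirit of Propositions~\ref{prop1} and~\ref{prop2}. Granting the proposition, the concluding assertion is immediate: any almost (regular) model set $\Gamma$ is a $(U,W)$-almost model set with $W$ compact and $U$ non-empty open (with $m_H(W \setminus U) = 0$ in the regular case), so applying the proposition with $K := W$ and this $U$ produces a second countable cut-and-project scheme in which $\Lambda'_{U'} \subseteq \Lambda_U \subseteq \Gamma \subseteq \Lambda_W = \Lambda_K \subseteq \Lambda'_{K'}$, where $U'$ is non-empty open (since $m_{H'}(U') \ge c\, m_H(U) > 0$) and $m_{H'}(K' \setminus U') \le c\,(m_H(W) - m_H(U)) = 0$ in the regular case.
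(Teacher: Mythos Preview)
Your approach is genuinely different from the paper's, and the gap you yourself flag as ``the main obstacle'' is real and is where the paper invests its effort.

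The paper does \emph{not} take a quotient by a compact subgroup. Instead it builds $H'$ as a metric completion: choose increasing $f_n\in\Cc(H)$ and decreasing $g_n\in\Cc(H)$ with $0\le f_n\le 1_U\le 1_K\le g_n\le 1$ and $m_H(g_n-1_K),\,m_H(1_U-f_n)\to 0$ (inner/outer regularity plus Urysohn). A translation-invariant pseudometric on $H$ is defined from the family $\{f_n,g_n\}$ via $d(s,t)=\sum_n 2^{-n}\|T_sf_n-T_tf_n\|_\infty$ (adding one more function to force a small neighbourhood), and $H'$ is the completion of the resulting metric quotient. The point is that each $f_n,g_n$ descends to $f_n',g_n'\in\Cc(H')$, so one can \emph{define}
\[
U'=\bigcup_n\{f_n'>0\},\qquad K'=\bigcap_n\{g_n'=1\},
\]
which are manifestly open and compact respectively, and the measure comparison follows because the pushforward of $m_H$ along the completion map is a constant multiple of $m_{H'}$, so $m_{H'}(K')\le c\int g_n\,\d m_H\to c\,m_H(K)$ and $m_{H'}(U')\ge c\int f_n\,\d m_H\to c\,m_H(U)$ with the \emph{same} $c$. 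The inclusions $\Lambda'_{U'}\subseteq\Lambda_U$ and $\Lambda_K\subseteq\Lambda'_{K'}$ are then immediate from $f_n\le 1_U$ and $1_K\le g_n$.

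Your quotient route runs into exactly the difficulty you describe: once you set $K'=q(K)$ you have $m_{H'}(K')=m_H(K+N)$, while any open $U'$ with $\Lambda'_{U'}\subseteq\Lambda_U$ must avoid $q$-images of the (possibly uncountable) set of lattice projections landing in $(K+N)\setminus U$, and there is no evident reason the remaining open set should have measure at least $\tfrac{m_H(K+N)}{m_H(K)}\cdot m_H(U)$. In the regular case this forces $m_{H'}(K'\setminus U')=0$, i.e.\ $U'$ must be conull in $K'$, and you have given no mechanism to produce such a $U'$. The paper sidesteps this entirely by letting the approximating functions, rather than the sets themselves, pass to $H'$. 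Two further remarks: your argument needs $G$ $\sigma$-compact to make the relevant star-image countable, whereas the paper's proof imposes no hypothesis on $G$; and your injectivity-of-$q$ trick only controls the countably many actual lattice points, not the measure of the set they might accumulate on, so it does not by itself salvage the $U'$ construction.
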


The proof of Proposition \ref{mainsecond} will be given below. It is based on the following lemma.

\begin{lemma}\label{lem:metcomp} Let $H$ be a locally compact abelian group with topology $\tau$. Let  $U\subseteq H$ be any zero neighborhood, and let $\mathcal{D}= \{f_i: i \in \NN\}\subseteq \Cc(H)$ be any countable uniformly bounded family of  functions.
Then there exist a metrisable group $(\widetilde{H}, \widetilde d)$ and a continuous group homomorphism $\phi : (H,\tau) \to (\widetilde{H}, \widetilde d)$ with dense range such that:
\begin{itemize}
  \item[(a)] For every $f \in \mathcal{D}$ there exists $\widetilde f \in \Cc(\widetilde{H})$ such that $ f= \widetilde f \circ \phi$.
  \item[(b)] There exists some zero neighborhood $\widetilde V\subseteq \widetilde H$ such that $\phi^{-1}(\widetilde V) \subseteq U$.
\end{itemize}
\end{lemma}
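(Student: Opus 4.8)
The goal is to build a metrisable quotient $\widetilde H$ of $H$ into which the countable data $\mathcal D$ descends and which separates the zero neighbourhood $U$. The plan is to construct a suitable pseudometric on $H$ from the functions $f_i$ together with $U$, and then to pass to the associated Hausdorff completion-style quotient.

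\begin{proof}[Proof sketch]
Since each $f_i\in\Cc(H)$ is uniformly continuous for the group structure (continuous with compact support on a locally compact group), we may choose for every $i$ and every $k\in\NN$ a symmetric open zero neighbourhood $V_{i,k}$ with $\sup_{h\in H}\,|f_i(h+v)-f_i(h)|\le 1/k$ for all $v\in V_{i,k}$. Also fix a symmetric open zero neighbourhood $V_0$ with $V_0+V_0\subseteq U$. Intersecting, for each $n\in\NN$ pick a symmetric open zero neighbourhood $W_n$ with $W_n\subseteq V_0\cap\bigcap_{i,k\le n}V_{i,k}$ and $W_{n+1}+W_{n+1}\subseteq W_n$. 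The family $\{W_n\}$ generates a group topology $\tau'\subseteq\tau$ on $H$ which is first countable; by the Birkhoff--Kakutani theorem there is a continuous (for $\tau'$, hence for $\tau$) left-invariant pseudometric $\rho$ on $H$ inducing $\tau'$, with the property that $\rho(h,0)<\varepsilon_n$ forces $h\in W_n$ for a suitable sequence $\varepsilon_n\downarrow 0$. Let $N=\{h\in H:\rho(h,0)=0\}=\bigcap_n W_n$, a closed subgroup, set $\widetilde H=H/N$ with the quotient metric $\widetilde d$ induced by $\rho$, and let $\phi:H\to\widetilde H$ be the quotient map. Then $(\widetilde H,\widetilde d)$ is a metric group, $\phi$ is a continuous group homomorphism, and $\phi$ has dense range since it is surjective.

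For (a): by construction $f_i$ is constant on cosets of $W_n$ up to $1/k$ whenever $n\ge\max(i,k)$, hence $f_i$ is constant on cosets of $N$, i.e. $f_i$ factors as $f_i=\widetilde f_i\circ\phi$ for a well-defined function $\widetilde f_i$ on $\widetilde H$. Continuity of $\widetilde f_i$ follows because $\phi$ is open (a quotient map of topological groups is open) and $f_i$ is continuous: a set $A\subseteq\widetilde H$ is open iff $\phi^{-1}(A)$ is open, and $\widetilde f_i^{-1}(O)=\phi(f_i^{-1}(O))$ is the image under the open map $\phi$ of an open set. For compact support, $\supp(\widetilde f_i)=\phi(\supp f_i)$ is compact as the continuous image of a compact set; thus $\widetilde f_i\in\Cc(\widetilde H)$, giving (a). For (b): take $\widetilde V=\phi(W_1)$, which is an open zero neighbourhood of $\widetilde H$ since $\phi$ is open; then $\phi^{-1}(\widetilde V)=W_1+N\subseteq W_1+W_1\subseteq V_0+V_0\subseteq U$, using $N\subseteq W_1$ and $W_1+W_1\subseteq V_0$, which proves (b).
\end{proof}

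\noindent\emph{Remark on the main obstacle.} The only delicate point is verifying that $\widetilde H$ is genuinely a (Hausdorff, locally compact) metric group rather than merely carrying a pseudometric: this is handled by quotienting out $N=\bigcap_n W_n$, after which left-invariance of $\rho$ descends to a genuine metric $\widetilde d$ on $H/N$ inducing the quotient topology, and local compactness of $\widetilde H$ follows because $\phi$ maps a compact neighbourhood of $0$ in $H$ onto a compact neighbourhood of $0$ in $\widetilde H$. Everything else — the factorisation in (a) and the neighbourhood estimate in (b) — is a direct consequence of having built $\{W_n\}$ to simultaneously refine all the modulus-of-continuity neighbourhoods of the $f_i$ and to sit inside $U$.
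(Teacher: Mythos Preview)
Your approach is genuinely different from the paper's. The paper introduces an auxiliary $f_0\in\Cc(H)$ with $\supp(f_0)\subseteq U$ and $f_0(0)=1$, defines the explicit pseudometric $d=\sum_{n\ge 0}2^{-n}\|T_sf_n-T_tf_n\|_\infty$ (this is where the uniform boundedness hypothesis enters), and takes $\widetilde H$ to be the \emph{completion} of $(H,d)$; part~(b) then drops out of the inequality $d_0\le d$ and the choice of $f_0$. You instead build a countable neighbourhood base from moduli of continuity and from $U$, invoke Birkhoff--Kakutani to get a pseudometric, and take $\widetilde H$ to be the \emph{quotient} $H/N$. Both routes yield the lemma as stated; yours has the pleasant side effect of not using the uniform boundedness assumption at all, while the paper's is more explicit and gives a complete $\widetilde H$ for free.

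Your argument is essentially correct, but the local compactness claim in your remark has a small gap. You assert that $\phi$ maps a compact $\tau$-neighbourhood of $0$ onto a compact neighbourhood in $\widetilde H$, yet $\phi:(H,\tau)\to(\widetilde H,\widetilde d)$ need not be open: it is open only from $(H,\tau')$, and a generic $\tau$-neighbourhood need not contain any $W_n$. The fix is immediate: when choosing $V_0$, also require it to be $\tau$-precompact (possible since $H$ is locally compact). Then $\overline{W_1}^{\tau}$ is $\tau$-compact, hence $\tau'$-compact, so $\phi(\overline{W_1}^{\tau})$ is compact in $\widetilde H$ and contains the $\widetilde d$-open set $\phi(W_1)$. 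Two further cosmetic points: in your proof of~(b) the justification ``$W_1+W_1\subseteq V_0$'' should read ``$W_1\subseteq V_0$'' (your displayed chain $W_1+N\subseteq W_1+W_1\subseteq V_0+V_0\subseteq U$ is already correct); and $\supp(\widetilde f_i)\subseteq\phi(\supp f_i)$ is what you actually prove and is all you need, the equality you wrote is not required.
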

\begin{proof}

Pick some $f_0 \in  \Cc(H)$ such that $\supp(f_0) \subseteq U$ and $f_0(0)=1$. We then have
$\{ s \in H : \|T_sf_0-f_0 \|_\infty < 1 \} \subseteq U$, where  $(T_sf)(h)=f(h+s)$. For $n\in\NN_0$ define translation invariant pseudometrics $d_n$ on $H$ by $d_n(s,t)= \| T_sf_n -T_tf_n \|_\infty$.
Then the family $(d_n)_{n\in\NN_0}$ is uniformly bounded, since $\mathcal D$ is uniformly bounded. Therefore, we can define a translation invariant pseudometric $d$ on $H$ via $d= \sum_{n\in\NN_0} 2^{-n} d_n$.  Then $Id: (H,\tau) \to (H,d)$ is continuous.  Indeed, since any finite set of functions in $\Cc(H)$ is equi-uniformly continuous, for each $\varepsilon >0$ there exists some zero neighborhood $V\subseteq H$ such that $V \subseteq \{ t \in H : d(t,0) < \varepsilon \}$. Let $\phi': (H,d) \to (\widetilde H, \widetilde d)$ be the completion map and define $\phi: (H,\tau)\to (\widetilde H, \widetilde d)$ by $\phi=\phi'\circ Id$. By construction $\widetilde H$ is a locally compact abelian group, and $\phi$ is a continuous group homomorphism with dense range, see e.g.~\cite[Ch. III, Thm. I, Thm. II]{BOU}. We verify properties (a) and (b).

\smallskip
\noindent ``(a)'' Consider arbitrary $n \in \NN_0$. We show that $f_n$ is uniformly continuous on $(H,d)$. Let $\varepsilon >0$ and define $\delta=2^{-n}\varepsilon$. If $d(s,t) \le\delta$ for some $s,t\in H$, we then have $2^{-n}d_n(s,t)\le \delta$ and hence
\[
|f_n(s)-f_n(t)| \le \|T_sf_n-T_tf_n\|_\infty = d_n(s,t)\le  2^n \delta = \varepsilon \,.
\]
As $f_n$ is uniformly continuous on $(H,d)$, it can be extended uniquely to the completion, i.e., there exists some $\widetilde f_n \in \Cu(\widetilde H)$ such that $\widetilde f_n \circ \phi = f_n$. We claim that $\widetilde f_n$ has compact support. Define $K_n=\supp(f_n)$ and consider the set  $\phi(K_n)\subseteq \widetilde H$, which is compact since $\phi$ is continuous. It suffices to show that $\widetilde f_n(\widetilde x)\ne0$ implies $\widetilde x \in \phi(K_n)$. Thus consider $\widetilde x \in \widetilde H$ such that $\widetilde f_n(\widetilde x) \neq 0$. By denseness of $\phi(H)$ in $\widetilde H$ and by metrisability of $\widetilde H$, we can find a sequence $(x_k)_{k \in \NN}$ in $H$ such that $\phi(x_k)\to \widetilde x$ as $k\to\infty$. We then have $\widetilde f_n(\phi(x_k)) \to \widetilde f_n(\widetilde x)$ as $k\to\infty$.  Since $\widetilde f_n(\widetilde x) \neq 0$, there exists some $k_0\in \NN$ such that for all $k>k_0$ we have $f_n(x_k)=\widetilde f_n(\phi(\widetilde x_k)) \neq 0$. This shows that $x_k \in K_n$ for all $k >k_0$ and hence $\phi(x_k) \in \phi(K_n)$ for all $k >k_0$. Since $\phi(K_n)$ is compact and $\phi(x_k) \to \widetilde x$ as $k\to\infty$, we get $\widetilde x \in \phi(K_n)$.

\smallskip

\noindent ``(b)'' We show that $\widetilde V= \{\widetilde s \in \widetilde H: \widetilde d(\widetilde s, \widetilde 0)<1\}$ has the desired property. Indeed, let $s \in \phi^{-1}(\widetilde V)\subseteq H$. Then $\phi(s) \in \widetilde V$ gives $ \|T_sf_0-f_0 \|_\infty =d_0(s,0) \leq d(s,0) = \widetilde d(\phi(s),\widetilde 0)< 1$. Thus $s\in U$, which proves the claim.
\end{proof}

\begin{remark}[On injectivity of $\phi$]
From the proof of Lemma~\ref{lem:metcomp} we infer
\begin{displaymath}
\begin{split}
\ker(\phi) &= \{ s \in H : d(s,0)=0 \} = \bigcap_{n\in\NN_0} \left\{s \in H : d_n(s,0) =0 \right\} = \bigcap_{n\in\NN_0} \text{Per}(f_n) \ ,
\end{split}
\end{displaymath}
where $\text{Per}(f_n)$ denotes the set of periods of $f_n$.
%
%
In particular, $\phi$ can be chosen injective if $H$ is metrisable. Indeed take $\psi \in \Cc(H)$ such that $0 \leq \psi \leq 1$ and $\psi(0)=1$, which is possible by Urysohn's lemma. Take any metric $d$ compatible with the topology of $H$ and define $\varphi\in \Cc(H)$ by  $\varphi(h)=\psi(h)\cdot  \left(1- d(h,0) \right)$. Then  $\varphi$ is aperiodic since $\varphi(h)=1$ if and only if $h=0$.
\end{remark}

\begin{proof}[Proof of Proposition~\ref{mainsecond}]

By regularity of the Haar measure $m_H$ on $H$ and by Urysohn's lemma, we can find an increasing sequence $(f_n)_{n\in\NN}$ in $\Cc(H)$ and a decreasing sequence $(g_n)_{n\in \NN}$ in $\Cc(H)$ such that $\|g_n \|_\infty =1$, $m_H(g_n-1_K)\le 1/n$, $m_H(1_U-f_n) \leq \frac{1}{n}$ and $0 \leq f_n \leq 1_{U} \leq 1_{K} \leq g_n\le 1$ for all $n\in \NN$. Since $\cL$ is lattice, there exists an open zero neighborhood $O_G\times O_H \subseteq G\times H$ such that $\cL \cap (O_G \times O_H) = \{ (0,0) \}$. We apply Lemma~\ref{lem:metcomp} with $\mathcal{D}:= \{ f_n,g_n: n\in \NN \}$ and $U=O_H$. Therefore, there exist a group $\widetilde H$ and some group homomorphism $\phi: H \to \widetilde H$ satisfying the conclusion of Lemma~\ref{lem:metcomp}. Define  $\cL' := \{ (x, \phi(x^\star)) : x \in L \}$ and $H'=\widetilde H$. We split the rest of the proof into steps.

\smallskip

\noindent {\it Step 1:  We show that $(G, H', \cL')$ is a cut-and-project scheme.} First we show that the group $\cL'$ is a lattice. By Lemma~\ref{lem:metcomp}, there exists an open zero neighborhood $O_{H'} \subseteq H'$ such that $\phi^{-1}(O_{H'}) \subseteq O_{H}$. It follows immediately that $\cL' \cap (O_G \times O_{H'}) = \{ (0,0) \}$. Hence $\cL'$ is uniformly discrete. In order to show relative denseness of $\cL'$, note that there exist compact sets $K_G\subseteq G$ and $K_H\subseteq H$ such that $\cL+(K_G \times K_H)= G \times H$. Then, by continuity of the completion map we have that $\phi (K_H)$ is compact and $\cL'+(K_G \times \phi(K_H))= G \times \phi(H)$. Since $\phi(H)$ is dense in $H'$, there exists a compact sets $K_{H'}$ with non-empty interior such that $\phi(H)+K_{H'}=H'$ and hence $\cL'+(K_G \times (\phi(K_H) +K_{H'}))= G \times H'$. Hence the group $\cL'$ is a Delone set, i.e., $\cL'$ is a lattice.
The lattice $\cL'$ projects injectively to $G$ as $\cL$ does and as $\phi(0^\star)=0\in H'$. In order to show that $\cL'$ projects densely to $H'$, let $O$ be any non-empty open set in $H'$. Then $\phi^{-1}(O)\subseteq H$ is open by continuity of the completion map, and $\phi^{-1}(O)$ is non-empty by denseness of $H$ in $H'$. Therefore there exists some $x\in L$ such that $x^\star \in \phi^{-1}(O)$. But this means  $\phi(x^\star) \in O$.

\smallskip

\noindent {\it Step 2: We construct the sets $U'$ and $K'$.} By Lemma~\ref{lem:metcomp} there exist $f'_n, g'_n \in \Cc(H')$ such that $f_n = f'_n \circ \phi$ and $g_n = g'_n \circ \phi$. Denseness of $\phi$ combined with $0\leq f_1 \leq \ldots \leq f_n \leq \ldots \leq g_n \leq \ldots \leq g_1\le 1$ gives $0 \leq f'_1 \leq \ldots \leq f'_n \leq \ldots \leq g'_n \leq \ldots \leq g'_1\leq 1$. Now define
\begin{displaymath}
  U'= \{ h \in H' :  f'_n(h) > 0 \text{ for some } n\in\NN \} \ , \qquad K'= \{ h \in H' : g'_n(h)  = 1 \text{ for all } n\in\NN \} \ .
\end{displaymath}
By definition $U'$ is open in $H'$, and $K'$ is a closed subset of $\supp(g'_1)$ and hence compact in $H'$. We show that $U' \subseteq K'$. Let $h \in U'$. Then, since $\phi(H)$ is dense in $H'$ and $U'$ is open, there exists a sequence $(x_k)_{k\in \NN}$ in $H$ such that $\phi(x_k) \in U'$ and $\phi(x_k) \to h$ as $k\to\infty$. Note that, by construction, for every $k$ there exists some $n=n(k)$ such that   $0 < f'_n(\phi(x_k))= f_n(x_k) \leq 1_{U}(x_k)$. In particular this yields $x_k \in U \subseteq K$ and hence $g'_m(\phi(x_k))=g_m(x_k) = 1$ for all  $m\in \NN$. This gives $\phi(x_k) \in K'$ for all $k\in \NN$. As $\phi(x_k) \to h$ and $K'$ is compact, we get  $x \in K'$.
This proves $U' \subseteq K'$.

\smallskip

\noindent {\it Step 3: Fix a Haar measure $m_{H'}$ on $H'$. We prove that there exists some constant $c \in (0, \infty)$ such that $m_{H'}(K')\leq c \cdot m_H(K)$ and $m_{H'}(U')\geq c \cdot m_H(U)$.}  Consider the pushforward of the Haar measure $m_H$ on $H$ through $\phi$. This is an $H$-invariant measure on $H'$ and hence, by denseness, an $H'$-invariant measure on $H'$.
Therefore there exists a constant $d \in [0,\infty)$ such that for all $f'\in \Cc(H')$ we have
\[
\int_H f' \circ \phi \,  \dd m_H = d \cdot  \int_{H'} f' \dd m_{H'} \ .
\]
Applying this formula to $f'=g'_1$, we get $d\cdot m_{H'}(g'_1)= m_H(g'_1\circ \phi)=m_H(g_1)>0$, which gives $d \neq 0$. Therefore we have for all $n\in \mathbb N$ that
\begin{displaymath}
d\cdot m_{H'}(K') \leq d \cdot \int_{H} g_n'(t) \dd m_H =\int_H g_n' \circ \phi \,  \dd m_H  =\int_H g_n \,  \dd m_H   \,.
\end{displaymath}
Letting $n \to \infty$ we get $m_{H'}(K')\leq  m_H(K)/d$.

Next, $0 \leq f_n \leq 1_{U}$ gives that $0 \leq f_n'(x') \leq 1$ on the dense subset $\phi(H)\subseteq H'$ and hence, via continuity, $0 \leq f'_n(x') \leq 1$ for all $n\in \mathbb N$ and $x'\in H'$. The definition of $U'$ then gives $0 \leq f_n' \leq 1_{U'}$. Therefore
\begin{displaymath}
d\cdot m_{H'}(U') \geq d \cdot \int_{H} f_n'(t) \dd m_H =\int_H f_n' \circ \phi \,  \dd m_H  =\int_H f_n \,  \dd m_H   \,.
\end{displaymath}
Letting $n \to \infty$ we get $m_{H'}(U')\geq m_H(U)/d$. The claim now follows with $c=1/d$.

\smallskip

\noindent {\it Step 4: We prove $\Lambda'_{U'} \subseteq \Lambda^{}_U$ and  $\Lambda^{}_K \subseteq \Lambda'_{K'}$.}
Let $x \in \Lambda'_{U'}$. Then $\phi(x^\star) \in U'$, and hence there exists some $n\in\NN$ such that
$f_n(x^\star)= f'_n(\phi(x^\star)) > 0$.
Since $f_n \leq 1_{U}$ we get that $x^\star \in U$ and hence $x \in \Lambda_U$. This shows
$\Lambda'_{U'} \subseteq \Lambda_U$. If $x \in \Lambda_K$ we have $x^\star \in K$ and hence for all $n\in\NN$ we have $g'_n(\phi(x^\star))=g_n(x^\star) = 1$. This means  $\phi(x^\star) \in K'$ and $x^\star \in \Lambda_{K'}$.

\smallskip

It follows immediately that any almost model set $\Lambda_U \subseteq \Gamma \subseteq \Lambda_K$ in $(G, H, \cL)$ satisfies $\Lambda_{U'} \subseteq \Gamma \subseteq \Lambda_{K'}$ and hence is an almost model set in $(G, H', \cL')$. If furthermore $m_H(K \backslash U)=0$, then step 3 gives $m_{H'}(K' \backslash U')=0$, and hence measure-theoretic regularity is preserved.

\noindent At this point we have proved the proposition -- apart from $\sigma$-compactness of $H'$. But $\sigma$-compactness may be assumed without loss of generality, as $K'$ lies inside a compactly generated open and closed subgroup of $H'$.
\end{proof}

\section{Extended cut-and-project schemes}\label{sec:ecps}

In this section we restrict to cut-and-project schemes $(\RR^d, H, \cL)$ with Euclidean direct space and second countable internal space $H$. Thus $\cL$ is countable, and also the annihilator $\cL^\circ\subseteq \RR^d \times \widehat H$ of $\cL$ is countable.

\begin{lemma} \label{lem2}
Let $A$ be any countable subset of $\RR^d$. Then there exists a lattice $D$ in $\RR^d$ such that $D \cap \Qspan(A)=\{0\}$ and $D^\circ \cap \Qspan(A)=\{0\}$, where $D^\circ=\{x\in \RR^n: \langle d,x\rangle \in \ZZ \text{ for all } d\in D\}$ is the lattice dual to $D$.
\end{lemma}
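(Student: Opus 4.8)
The plan is to build the lattice $D$ by a countable induction, choosing the rows of a generating matrix one coordinate-vector at a time so as to avoid the countably many rational hyperplanes spanned by $A$, while simultaneously keeping the dual lattice $D^\circ$ away from $\Qspan(A)$. A clean way to organise both requirements at once is the following observation: if $D=\lambda \ZZ v_1\oplus\cdots\oplus\ZZ v_d$ is generated by a basis $v_1,\dots,v_d$ of $\RR^d$, then the dual lattice $D^\circ$ is generated by the dual basis $v_1^\ast,\dots,v_d^\ast$ (defined by $\langle v_i,v_j^\ast\rangle=\delta_{ij}$). Both $D$ and $D^\circ$ lie in the $\QQ$-vector space $V=\Qspan(v_1,\dots,v_d,v_1^\ast,\dots,v_d^\ast)$, which is finite-dimensional over $\QQ$. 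So it suffices to choose the $v_i$ (equivalently, a matrix $M\in GL_d(\RR)$ with rows $v_i$, so that $M^{-\top}$ has rows $v_i^\ast$) such that the $\QQ$-span of its $2d$ rows-and-dual-rows meets $\Qspan(A)$ only in $\{0\}$.

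The key algebraic input is that $\Qspan(A)$, being the $\QQ$-span of a countable set, is a countable-dimensional $\QQ$-subspace of $\RR^d$; hence there are only countably many $\QQ$-linear functionals that we must ``witness as nonzero'' on the new vectors, and more to the point $\RR$ is a $\QQ$-vector space of uncountable dimension. First I would fix the target: I want to find real numbers, assembled into $M$, that are \emph{algebraically independent over the field $\QQ(A)$ generated by the coordinates of the (countably many) elements of $A$}. Concretely, choose a transcendence basis-style set: since $\QQ(A)$ is a countably generated field, it is countable, so $\RR$ has uncountable transcendence degree over it, and we may pick $d^2$ real numbers $t_{ij}$ that are algebraically independent over $\QQ(A)$. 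Set $M=(t_{ij})$. Then $\det M$ is a nonzero polynomial in the $t_{ij}$ with rational coefficients evaluated at an algebraically independent point, hence $\det M\neq 0$, so $M\in GL_d(\RR)$ and $D:=M\ZZ^d$ is a lattice; likewise $D^\circ=(M^{-\top})\ZZ^d=M^{-\top}\ZZ^d$, and the entries of $M^{-\top}$ lie in $\QQ(t_{11},\dots,t_{dd})$.

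For the two intersection conditions: suppose $0\ne x=Mk\in D\cap\Qspan(A)$ for some $k\in\ZZ^d$. Each coordinate of $x$ is then both a nonzero $\QQ$-linear combination of the $t_{ij}$ (namely an entry of $Mk$) and an element of $\QQ(A)$; but a nonzero $\QQ(A)$-linear combination of the algebraically independent elements $t_{ij}$ cannot lie in $\QQ(A)$, a contradiction. The same argument handles $D^\circ$: the entries of $M^{-\top}$ are rational functions of the $t_{ij}$, and one checks that a nonzero such combination landing in $\QQ(A)$ would force a nontrivial algebraic relation among the $t_{ij}$ over $\QQ(A)$ — again impossible. I would phrase this last step carefully by clearing denominators: an entry of $M^{-\top}k$ equals $P(t)/\det M$ for some $P\in\QQ[t_{ij}]$, so membership in $\QQ(A)$ gives $P(t)=c\cdot\det M$ with $c\in\QQ(A)$, i.e.\ a polynomial relation over $\QQ(A)$, which by algebraic independence forces $P$ and $c\det M$ to be identically zero as polynomials, whence (using $\det M\ne0$ pointwise) $x=0$.

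The main obstacle is purely bookkeeping: making sure the dual-lattice condition is handled correctly, i.e.\ that $D^\circ$ really is $M^{-\top}\ZZ^d$ with entries in $\QQ(t_{ij})$ and that ``nonzero combination of algebraically independent transcendentals avoiding $\QQ(A)$'' is applied to the \emph{rational-function} entries rather than polynomial ones. An alternative, if one prefers to avoid transcendence-degree counting, is the explicit inductive construction: enumerate $\Qspan(A)=\bigcup_n V_n$ by an increasing chain of finite-dimensional $\QQ$-subspaces, and build $v_1,\dots,v_d$ greedily, at each stage choosing $v_i$ outside the countable union of all $\QQ$-subspaces spanned by $v_1,\dots,v_{i-1}$ together with finitely many $V_n$'s and finitely many constraints coming from the dual vectors; a finite-dimensional real vector space is not a countable union of proper subspaces, so each choice is possible. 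Either route works; I would present the transcendence-basis version as it is shorter and makes the symmetry between $D$ and $D^\circ$ transparent.
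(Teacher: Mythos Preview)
Your approach is correct in spirit and genuinely different from the paper's. The paper takes a \emph{diagonal} lattice: it sets $D=\Zspan(c_1e_1,\dots,c_de_d)$, so that $D^\circ=\Zspan(c_1^{-1}e_1,\dots,c_d^{-1}e_d)$ is automatic and no Cramer-type analysis of $M^{-\top}$ is ever needed. The only work is to choose real numbers $c_1,\dots,c_d$ so that $T\cup\{c_1,\dots,c_d,c_1^{-1},\dots,c_d^{-1}\}$ is $\QQ$-linearly independent, where $T$ is a countable $\QQ$-basis of $\Qspan\bigl(\bigcup_i\pi_i(A)\bigr)$; this is done by observing that any bad $c$ satisfies a quadratic equation over the countable ring $\QQ[T]$, and there are only countably many roots of such equations. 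Your route instead takes a generic matrix $M=(t_{ij})$ with $d^2$ entries algebraically independent over $\QQ(A)$; this makes the $D$-case a one-liner but transfers all the effort to the $D^\circ$ case. In short: the paper stays at the level of $\QQ$-linear (not algebraic) independence and never touches the inverse matrix, while you invoke a stronger hypothesis to get the primal case for free and then pay for it on the dual.

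There is, however, a genuine gap in your $D^\circ$ argument. From $P(t)=c\cdot\det M(t)$ with $c\in\QQ(A)$, algebraic independence of the $t_{ij}$ over $\QQ(A)$ only tells you that $P-c\,\det M$ is the zero polynomial in $\QQ(A)[t_{ij}]$; it does \emph{not} force $P$ and $c\,\det M$ to vanish separately, as you assert. What actually finishes the job is a degree count: writing $P_i=\sum_j k_j C_{ij}$ as a $\ZZ$-combination of $(d{-}1)\times(d{-}1)$ cofactors, $P_i$ has degree at most $d-1$ in the $t_{ij}$, while $\det M$ has degree $d$; hence $c=0$ and then $P_i\equiv 0$. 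You then still need that the cofactors $C_{i1},\dots,C_{id}$ are linearly independent as polynomials (they are: $C_{ij}$ is the only one of them that omits all column-$j$ variables), which finally gives $k=0$. With these two extra sentences your argument is complete; alternatively, adopting the paper's diagonal lattice avoids this bookkeeping entirely.
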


\begin{proof}
We will use the coordinate projections $\pi_i:\RR^d\to \RR$, given by $(x_1,\ldots, x_d)^T\mapsto x_i$.  Define the set $S=\bigcup_{i=1}^d \pi_i(A)\subseteq \RR$ of all entries of vectors from $A$, and consider the
$\QQ$-vector space $V=\Qspan(S)\subseteq \RR$. Let $T$ be a countable $\QQ$-basis for $V$. We claim that there exist nonzero real numbers $c_1, \ldots, c_d$ such that the set
$$
T'=T \cup \{ c_1, \ldots, c_d \} \cup \{c_1^{-1},\ldots, c_d^{-1} \}
$$
is $\QQ$-linearly independent. This is seen inductively. Indeed, if $ T \cup \{ c, \frac{1}{c} \}$ is $\QQ$-linearly dependent, then $c$ satisfies a quadratic equation over the countable multivariate polynomial ring $\QQ[T]$. As the number of quadratic equations with coefficients in $\QQ[T]$ is countable,  hence so is the number of roots of such equations. Since $\RR$ is uncountable, we can pick a $c \in \RR$ which is not the root of such an equation.

Consider now the lattice $D=\Zspan(\{c_1e_1, \ldots, c_de_d\})\subseteq \RR^d$, where $e_i\in \RR^d$ is the $i^{th}$ coordinate vector, i.e., we have $\pi_j(e_i)=\delta_{ij}$ for $i,j\in \{1,\ldots, d\}$. In order to show $D \cap \Qspan(A)=\{0\}$, consider any $u\in D \cap \Qspan(A)$.
Then there exist vectors  $a_1, \ldots, a_k \in A$, rationals $q_1, \ldots, q_k \in \QQ$ and integers $n_1,\ldots , n_d \in \ZZ$ such that
\begin{displaymath}
  u=  \sum_{i=1}^d n_i c_i e_i = \sum_{j=1}^k q_j a_j \ .
\end{displaymath}
For the vector components we have $n_ic_i= \sum_{j=1}^k q_j \pi_i (a_j)$ for $i\in\{1,\ldots, d\}$. Using $\QQ$-linear independence of $T'$ we conclude $n_1=\ldots = n_d=0$ and $q_1=\ldots=q_k=0$, which proves $D \cap \Qspan(A)=\{0\}$. The claim $D^\circ \cap \Qspan(A)=\{0\}$ is shown analogously, noting $D^\circ=\Zspan(\{c_1^{-1}e_1, \ldots, c_d^{-1}e_d\})\subseteq \RR^d$.
\end{proof}

\medskip

\begin{lemma}\label{lem3} Let $(\RR^d, H, \cL)$ be a cut-and-project scheme such that $\cL$ and $\cL^\circ$ are both countable. Let $D$ be a lattice in $\RR^d$ such that $D \cap \pi^{G}(\cL)=\{0\}$ and $D^\circ \cap \pi^{\hat{G}}(\cL^\circ)=\{0\}$. Consider the compact abelian group $\KK= \RR^d / D$ with quotient map $\psi: G \to \KK$. Then the following hold.
\begin{itemize}
  \item[(a)] $\psi$ is one-to-one on $\pi^{G}(\cL)$, and $\psi(\pi^{G}(\cL))$ is dense in $\KK$.
  \item[(b)] For each non-empty open set $U \subseteq H$ the set $\psi(\Lambda_U)$ is dense in $\KK$.
\end{itemize}
\end{lemma}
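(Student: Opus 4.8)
The plan is to exploit the duality between $D\cap\pi^G(\cL)=\{0\}$ and $D^\circ\cap\pi^{\hat G}(\cL^\circ)=\{0\}$, using the first condition for the injectivity claim and the second for the denseness claims via Pontryagin duality and (for (b)) a Weyl-type equidistribution/unique-ergodicity argument.

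For part (a), the map $\psi\colon\RR^d\to\KK=\RR^d/D$ has kernel $D$, so $\psi$ restricted to $\pi^G(\cL)$ is one-to-one exactly because $D\cap\pi^G(\cL)=\{0\}$; this is immediate. For denseness of $\psi(\pi^G(\cL))$ in the compact group $\KK$, I would argue by Pontryagin duality: a subgroup of a compact abelian group is dense if and only if the only character annihilating it is trivial. The character group $\widehat\KK$ is naturally $D^\circ$. So I must show that if $\chi\in D^\circ$ satisfies $\chi(g)=1$ for all $g\in\pi^G(\cL)$, then $\chi=0$. But $\chi\in D^\circ\subseteq\RR^d=\widehat{\RR^d}$ annihilating $\pi^G(\cL)$ means the pair $(\chi,0)\in\RR^d\times\widehat H$ annihilates $\cL=\{(g,g^\star):g\in L\}$, i.e.\ $(\chi,0)\in\cL^\circ$, hence $\chi\in\pi^{\hat G}(\cL^\circ)\cap D^\circ=\{0\}$. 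This gives (a).

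For part (b), fix a non-empty open $U\subseteq H$; I want $\psi(\Lambda_U)$ dense in $\KK$. The natural route is to view things inside the compact group $\KK\times\overline{\pi^H(\cL)}$ — but more cleanly, inside the torus-like object $(\RR^d\times H)/\cL$, which is compact since $\cL$ is cocompact. Consider the image $\mathbb{T}=(\RR^d\times H)/\cL$ and the continuous surjection $\RR^d\times H\to\KK$ obtained by projecting to $\RR^d$ then applying $\psi$; since $D\times H\supseteq\{0\}\times H$ is not contained in $\cL$ in general, one instead uses the subgroup generated. The point set $\Lambda_U=\{g\in L:g^\star\in U\}$ maps under $\psi$ to $\{\psi(g):g\in L,\ g^\star\in U\}$. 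Using that $\cL$ projects densely to $H$ and the dynamical system $(G\times H)/\cL$ under the $\RR^d$-translation action is minimal and uniquely ergodic (because $\cL$ is a lattice, this is a standard Kronecker-type system), the orbit $\{(g,g^\star)+\cL:g\in L\}$ equidistributes in $\mathbb{T}$; since $U$ is open non-empty, the ``visits'' with $g^\star\in U$ are syndetic in an appropriate sense, and their $\psi$-images are dense in $\KK$ provided $\psi\circ\pi^G$ followed by the natural identification has dense image on $L$ — which is precisely part (a). Concretely, I would show: the closure $\overline{\psi(\Lambda_U)}$ is a translation-invariant closed subset of $\KK$, hence a closed subgroup coset; combined with (a) and a pigeonhole/Baire argument over a countable cover of $\overline{U}$ by translates of a small open set, this forces $\overline{\psi(\Lambda_U)}=\KK$.

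The main obstacle I anticipate is part (b): passing from "the full lattice projection is dense" (part (a)) to "the restricted set $\Lambda_U$, carved out by the window condition, is still dense". The clean way is to set up the minimal dynamical system $\mathbb{T}=(\RR^d\times H)/\cL$, observe that $\psi$ factors through a continuous surjective homomorphism $\mathbb{T}\to\KK$ (this needs $D\subseteq\RR^d$ to lift compatibly, which is where $D\cap\pi^G(\cL)=\{0\}$ and cocompactness of $\cL$ enter), and then invoke that for a minimal flow the return set to any non-empty open set, projected forward by a surjective factor map, is dense. I expect the bookkeeping of which group is a factor of which — and checking that the window-condition fibers are genuinely "thick" enough — to be the delicate point, whereas part (a) is essentially a one-line duality computation once $\widehat\KK\cong D^\circ$ is identified.
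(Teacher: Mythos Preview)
Your part~(a) is correct and coincides with the paper's argument: both identify $\widehat\KK\cong D^\circ$ and use Pontryagin duality, the paper via injectivity of the dual homomorphism $D^\circ\to(\widehat G\times\widehat H)/\cL^\circ$, you via the equivalent annihilating-character criterion.

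Part~(b), however, has genuine gaps. The factor map $\mathbb T=(\RR^d\times H)/\cL\to\KK$ you propose does not exist: for $\psi$ to descend to $\mathbb T$ one would need $\pi^G(\cL)\subseteq D$, whereas the hypothesis $D\cap\pi^G(\cL)=\{0\}$ is exactly the opposite. Your equidistribution claim ``the orbit $\{(g,g^\star)+\cL:g\in L\}$ equidistributes in $\mathbb T$'' is ill-posed, since every $(g,g^\star)$ lies in $\cL$ and hence all these cosets equal the identity of $\mathbb T$. And the assertion that $\overline{\psi(\Lambda_U)}$ is translation-invariant in $\KK$ is never justified---invariant under which translations?---and is not obvious. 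None of the sketched routes is actually carried through.

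The paper proves~(b) by a Weyl-type equidistribution argument instead: choose a compact, measure-theoretically regular $V\subseteq U$ with non-empty interior, form the empirical measures $\mu_n=\tfrac{1}{m_G(A_n)}\sum_{x\in A_n\cap\Lambda_V}\delta_{\psi(x)}$ on $\KK$ along a van~Hove sequence, and identify the weak limit as $\dens(\Lambda_V)\cdot m_\KK$ by computing Fourier--Stieltjes transforms. These reduce to Fourier--Bohr coefficients $a_\chi(\delta_{\Lambda_V})$, known to vanish for $\chi\notin\pi^{\widehat G}(\cL^\circ)$; combined with $D^\circ\cap\pi^{\widehat G}(\cL^\circ)=\{0\}$ only the trivial character survives, and full support of the limit measure yields denseness. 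Incidentally, your first instinct---working in $\KK\times H$---can be salvaged and is arguably cleaner: a direct duality computation shows $\{(\psi(g),g^\star):g\in L\}$ is dense in $\KK\times H$ (its annihilator in $D^\circ\times\widehat H$ is $\cL^\circ\cap(D^\circ\times\widehat H)$, which the two hypotheses force to be trivial), and intersecting with $\KK\times U$ immediately gives~(b). You abandoned that line too quickly in favour of the problematic $\mathbb T$ approach.
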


\begin{remark}
As the following proof shows, the construction in Lemma~\ref{lem3} works more generally for cut-and-project schemes with second countable $G$ and $H$, if  there is a continuous group homomorphism to a compact metrisable group $\KK$, such that $\ker(\psi)\cap \pi^G(\cL)=\{0\}$ and $\ker(\widehat \psi)\cap \pi^{\widehat G}(\cL^0)=\{0\}$, where $\widehat \psi$ denotes the group homomorphism dual to $\psi$. In the setting of the lemma, the latter condition may be satisfied due to Lemma~\ref{lem2}.
\end{remark}

\begin{proof}
Existence of a lattice $D\subset \RR^d$ such that $D \cap \pi^{G}(\cL)=\{0\}$ and $D^\circ \cap \pi^{\hat{G}}(\cL^\circ)=\{0\}$ follows from Lemma~\ref{lem2}.

\noindent (a) Injectivity of $\psi|_{\pi^{G}(\cL)}$ is immediate from $\pi^{G}(\cL) \cap D =\{ 0\}$. Denseness is a consequence of Pontryagin duality, compare \cite[Prop.~4.1]{RVM3}: The map $\alpha: \cL \to \KK$, given by $\ell\mapsto \psi(\pi^G(\ell))$, is a continuous group homomorphism. Thus $\psi(\pi^{G}(\cL))$ is dense in $\KK$ if the dual group homomorphism $\widehat\alpha: D^\circ\to (\widehat G\times \widehat H)/\cL^\circ$, given by $\chi \mapsto (\chi,1)+\cL^\circ$, is one-to-one. Here $1\in \widehat H$ denotes the trivial character. But the latter holds as $D^\circ \cap \pi^{\hat{G}}(\cL^\circ)=\{0\}$.

\noindent (b)
Here we follow ideas behind Weyl's uniform distribution theorem. Consider the space $\mathcal M_f(\KK)$ of finite Borel measures on $\KK$, equipped with the topology of weak convergence. Fix the van Hove sequence $(A_n)_{n\in \mathbb N}$ in $\RR^d$ where $A_n=[-n,n]^d$, and let $V \subseteq U\subseteq H$ be a compact measure-theoretically regular set having non-empty interior. Consider measures $\mu_n\in \mathcal M_f(\KK)$ given by
\begin{displaymath}
\mu_n= \frac{1}{m_G(A_n)} \sum_{x \in A_n \cap \Lambda_V} \delta_{\psi(x)} \ ,
\end{displaymath}
where $m_G$ is the Lebesgue measure.
Then $(\mu_n)_{n\in\mathbb N}$ is a sequence of equi-bounded measures in $\mathcal M_f(\KK)$. We claim that $(\mu_n)_{n\in\mathbb N}$ converges weakly to $\dens(\Lambda_V) \cdot m_{\KK}$, where $m_{\KK}$ is the Haar probability measure on $\KK$.

Since $\KK$ is metrisable, any set of equi-bounded measures in $\mathcal M_f(\KK)$ is compact and metrisable. Therefore it suffices to show that any convergent subsequence converges vaguely to $\dens(\Lambda_V) \cdot m_{\KK}$.
Let $(\mu_{k_n})_{n\in\mathbb N}$ be a subsequence of $(\mu_n)_{n\in\mathbb N}$ that converges vaguely to $\mu$. Consider the Fourier-Stieltjes transform $\widehat \mu$ of $\mu$, which is a uniformly continuous and bounded function on $\widehat{\KK}\cong D^\circ$. Then, for all $\chi \in  D^0$ we have by weak convergence and $D$-invariance that
\[
\widehat\mu(\chi)=\int_{\KK} \chi(t) \dd \mu(t) = \lim_n \int_{\KK} \chi(t) \dd \mu_{k_n}(t) = a_{\chi}( \delta_{\Lambda_V}) \ ,
\]
where $a_\chi(\nu)=\lim_{n\to\infty} \frac{1}{|A_{k_n}|} \int \overline{\chi}(t) {\rm d}\nu(t)$ denotes the Fourier-Bohr coefficient of $\nu$ at $\chi$ along $(A_{k_n})_n$. Existence and properties of Fourier-Bohr coefficients for regular model sets have been discussed e.g.~in \cite[Sec.~7]{L09} and in \cite[Cor.~3.40]{LSS20}. For $\chi=1$ we have $a_{1}(\delta_{\Lambda_V})= \dens(\Lambda_V)$.
By construction, for $\chi\ne 1$ we have $\chi \notin  \pi^{\hat{G}}(\cL^\circ)$ and hence
$a_{\chi}(\delta_{\Lambda_V})= 0$.
Obviously, the measure $\mu'=\dens(\Lambda_V)\cdot  m_{\KK}$ has the same Fourier-Stieltjes transform as $\mu$.
By injectivity of the Fourier transform \cite[1.7.3 (b)]{R62} we conclude $\mu=\mu'$.

We finally show that $\psi(\Lambda_V)$ is dense in $\KK$. Consider an arbitrary non-empty open set $O\subseteq \KK$. Take $\phi \in \Cc(\KK)$ such that $\supp(\phi) \subseteq O$ and $m_\KK(\phi) \neq 0$. We then have $0\neq \mu(\phi)=\lim_{n\to\infty} \mu_n(\phi)$. Take some $n$ such that $0\neq \mu_n(\phi)$.
Then there exists some $t \in \supp(\phi) \cap \psi (\Lambda_V \cap A_n) \subseteq O \cap \psi(\Lambda_V)$.
\end{proof}

\begin{lemma}\label{lem4} Under the conditions of Lemma~\ref{lem3}, the map
$\phi: \pi^G(\cL) \to H \times \KK$, given by $\phi(x)=(x^\star, \psi(x))$, has dense range.
\end{lemma}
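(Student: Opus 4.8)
The plan is to show that $\phi(\pi^G(\cL))$ is dense in $H\times\KK$ by combining the denseness of $\pi^H(\cL)$ in $H$ (available from the cut-and-project scheme) with the denseness results for $\psi$ established in Lemma~\ref{lem3}. A basic open set in $H\times\KK$ has the form $O_H\times O_\KK$ with $O_H\subseteq H$ and $O_\KK\subseteq\KK$ non-empty open; we must find $x\in\pi^G(\cL)$ with $x^\star\in O_H$ and $\psi(x)\in O_\KK$. The obstruction to a naive two-step argument (first pick $x$ with $x^\star\in O_H$, then adjust) is that moving $x$ to fix the $\KK$-coordinate will generally disturb the already-achieved $H$-coordinate, so the two conditions cannot be handled independently. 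The right move is to apply Lemma~\ref{lem3}(b) not to $\KK$ but to a finer open set inside $H$.

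First I would fix a non-empty basic open set $O_H\times O_\KK$ and use denseness of $\pi^H(\cL)$ to choose $x_0\in\pi^G(\cL)$ with $x_0^\star\in O_H$; then translate so the target becomes $(W,O_\KK)$ where $W=-x_0^\star+O_H$ is an open neighbourhood of $0$ in $H$. Equivalently, shrinking, it suffices to produce, for every non-empty open $U\subseteq H$ and every non-empty open $O\subseteq\KK$, an element $x\in\Lambda_U=\{x\in\pi^G(\cL):x^\star\in U\}$ with $\psi(x)\in O$; indeed then $x_0+x\in\pi^G(\cL)$ has star in $x_0^\star+W\subseteq O_H$ (after suitably choosing $W$ balanced) and $\psi(x_0+x)=\psi(x_0)+\psi(x)$, so a final translation of $O$ by $-\psi(x_0)$ reduces to exactly this statement. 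But "$\psi(\Lambda_U)$ is dense in $\KK$ for every non-empty open $U\subseteq H$" is precisely Lemma~\ref{lem3}(b). Hence the desired $x$ exists.

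The main step, then, is the bookkeeping that reduces $\phi$-denseness in the product $H\times\KK$ to the single statement Lemma~\ref{lem3}(b): one uses that $\phi$ is a group homomorphism (clear from $\phi(x+y)=(x^\star+y^\star,\psi(x)+\psi(y))$ since $\star$ and $\psi$ are homomorphisms on $\pi^G(\cL)$), that $\pi^H(\cL)$ is dense in $H$, and that $\Lambda_U$ for $U$ ranging over a neighbourhood basis of $0$ in $H$ has $\psi$-image dense in $\KK$. I expect no real obstacle beyond being careful that the open set $U\subseteq H$ supplied to Lemma~\ref{lem3}(b) is genuinely open and non-empty (it is, being a translate of an open neighbourhood of $0$), and that the translation argument is valid in the topological group $H\times\KK$ (translations are homeomorphisms). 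Thus $\phi$ has dense range, as claimed.
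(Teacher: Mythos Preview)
Your proposal is correct and uses the same key input as the paper, namely Lemma~\ref{lem3}(b), but you have made the argument considerably more complicated than necessary. The paper's proof is three lines: given a basic open set $V\times O\subseteq H\times\KK$, the condition $x^\star\in V$ is \emph{by definition} the condition $x\in\Lambda_V$, so one applies Lemma~\ref{lem3}(b) directly with $U=V$ to find $x\in\Lambda_V$ with $\psi(x)\in O$, and then $\phi(x)=(x^\star,\psi(x))\in V\times O$.

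Your translation manoeuvre with $x_0$ is entirely superfluous: the ``reduced'' statement you arrive at --- for every non-empty open $U\subseteq H$ and $O\subseteq\KK$ there exists $x\in\Lambda_U$ with $\psi(x)\in O$ --- is not a reduction at all but already the full claim, once you recognise that $\Lambda_{O_H}$ is exactly the set of $x$ with $x^\star\in O_H$. There is no ``obstruction'' to handling the two coordinates simultaneously, because Lemma~\ref{lem3}(b) already couples them: it says $\psi$ is dense on the subset $\Lambda_U$, not merely on all of $\pi^G(\cL)$.
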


\begin{proof}
Let $ \varnothing \neq U \subseteq H \times \KK$ be any open set. Then, there exist open sets $V \subseteq H$ and $O \subseteq \KK$ such that $ \varnothing \neq V \times O \subseteq U$. By Lemma~\ref{lem3}, the set $\psi(\Lambda_V)$ is dense in $\KK$. Therefore, there exists some $x \in \Lambda_V$ such that $\psi(x) \in O$. Thus $\phi(x) =(x^\star, \psi(x)) \in V \times O \subseteq U$.
\end{proof}

We can now define an extended cut-and-project scheme with injective $\star$-map.

\begin{theorem}[Extended cut-and-project scheme]\label{thm2} Let $(G, H, \cL)$ be a cut-and-project scheme with $G=\RR^d$ such that $\cL$ and $\cL^\circ$ are both countable.
Let $\KK$ be as in Lemma~\ref{lem3} and define $H'=H \times \KK$. Let $\phi:\pi^G(\cL)\to H'$ be as in Lemma~\ref{lem4} and define $\cL'= \{ (x, \phi(x)) : x \in \pi^G(\cL) \}$. Then the following hold.
\begin{itemize}
  \item[(a)] $(G, H', \cL')$ is a cut-and-project scheme with injective $\star$-map $\phi$.
  \item[(b)] For any $W \subseteq H$ we have $\Lambda_W=\Lambda'_{W'}$ where $W'=W \times  \KK$.
  \item[(c)] If $H$ is $\sigma$-compact or metrisable, respectively, then so is $H'$.
\end{itemize}
\end{theorem}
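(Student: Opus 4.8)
The plan is to verify the three claims (a), (b), (c) more or less in order, relying on the preparatory lemmas for the substantive content.

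For (a), I would first check that $\cL'$ is a group: this is immediate since $\phi$ is a group homomorphism (a restriction of the star-map together with the quotient map $\psi$, both homomorphisms), so $\cL' = \{(x,\phi(x)) : x\in\pi^G(\cL)\}$ is the graph of a homomorphism and hence a subgroup of $G\times H'$. Injectivity of $\pi^G|_{\cL'}$ is obvious from the graph structure, so the $\star$-map of $(G,H',\cL')$ is exactly $\phi$, which is injective because its first component $x\mapsto x^\star$ is already injective (the original cut-and-project scheme has an injective relation $\pi^G|_\cL$). Density of $\pi^{H'}(\cL') = \phi(\pi^G(\cL))$ in $H' = H\times\KK$ is precisely Lemma~\ref{lem4}. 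The one genuinely nontrivial point is that $\cL'$ is \emph{uniformly discrete and relatively dense} in $G\times H'$. For uniform discreteness: if $(x,\phi(x))$ lies in a small neighborhood $O_G\times(O_H\times\KK)$ of the origin with $O_G, O_H$ chosen so that $\cL\cap(O_G\times O_H)=\{(0,0)\}$, then $x\in O_G$ and $x^\star\in O_H$, forcing $x=0$. For relative denseness: pick compact $K_G\subseteq G$, $K_H\subseteq H$ with $\cL+(K_G\times K_H) = G\times H$; then for any $(g, h, k)\in G\times H\times\KK$ one writes $g = x + k_G$, $h = x^\star + k_H$ with $(x,x^\star)\in\cL$, $(k_G,k_H)\in K_G\times K_H$, and since $\KK$ is compact the remaining $\KK$-coordinate is absorbed by $\{0\}\times\{0\}\times\KK$, so $\cL' + (K_G\times K_H\times\KK) = G\times H'$ with $K_G\times K_H\times\KK$ compact.

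For (b), this is a direct computation. Given $W\subseteq H$ and $W' = W\times\KK$, an element $x\in\Lambda'_{W'}$ means $x\in\pi^G(\cL)$ with $\phi(x) = (x^\star,\psi(x))\in W\times\KK$, i.e.\ simply $x^\star\in W$, which is exactly $x\in\Lambda_W$; the $\KK$-coordinate imposes no constraint. So $\Lambda'_{W'} = \Lambda_W$.

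For (c), metrisability of $H'=H\times\KK$ follows from metrisability of $H$ together with metrisability of $\KK$; note $\KK = \RR^d/D$ is a compact metrisable group (it is a torus, being the quotient of $\RR^d$ by a lattice), so a finite product of metrisable groups is metrisable. For $\sigma$-compactness, $H'$ is the product of a $\sigma$-compact group $H$ with the compact group $\KK$, and a product of a $\sigma$-compact space with a compact space is $\sigma$-compact (write $H = \bigcup_n K_n$ with $K_n$ compact, so $H' = \bigcup_n (K_n\times\KK)$). I expect essentially no obstacle here once one records that $\KK$ is a torus, hence compact metrisable; the main substance of the whole theorem has already been isolated into Lemmas~\ref{lem2}--\ref{lem4}, so the proof of Theorem~\ref{thm2} itself is mostly bookkeeping, with the verification of uniform discreteness and relative denseness of $\cL'$ in part (a) being the only place requiring a small argument.
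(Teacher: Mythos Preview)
Your overall structure matches the paper's proof almost exactly: the uniform discreteness and relative denseness arguments for $\cL'$ are the same, part (b) is the same direct computation, and part (c) is (if anything) more explicit than the paper, which simply notes that $\KK$ is compact and metrisable.

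There is, however, one genuine error in part (a). You claim that the new $\star$-map $\phi$ is injective ``because its first component $x\mapsto x^\star$ is already injective (the original cut-and-project scheme has an injective relation $\pi^G|_\cL$)''. This conflates two different injectivity statements. Injectivity of $\pi^G|_{\cL}$ is what makes the $\star$-map \emph{well-defined}; it does \emph{not} make the $\star$-map injective. Injectivity of $x\mapsto x^\star$ is equivalent to injectivity of $\pi^H|_{\cL}$, which is precisely the extra hypothesis this theorem is designed to manufacture --- it is not assumed for $(G,H,\cL)$. The correct argument uses the \emph{second} component of $\phi$: by Lemma~\ref{lem3}(a), the map $\psi$ is one-to-one on $\pi^G(\cL)$, so if $\phi(x)=\phi(y)$ then in particular $\psi(x)=\psi(y)$, whence $x=y$. (The paper itself leaves this verification implicit, but the ingredient is Lemma~\ref{lem3}(a), not the original $\star$-map.)
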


\begin{remark}\label{rem:preserved}
Note that, under the above assumptions, any model set in $(G, H, \cL)$ is a model set in $(G, H', \cL')$, and the window associated to the extended cut-and-project scheme inherits measurability and any property in $(T)$ or $(M)$.
\end{remark}

\begin{proof}

\noindent ``(a)" We show that the subgroup $\cL'$ of $G \times H'$ is a lattice. Since $\cL$ is a lattice in $G \times \KK$, there exists open sets $0 \in U \subseteq G$ and $0 \in V \subseteq H$ such that
$\cL \cap (U \times V)= \{ (0,0) \}$. Then $\cL' \cap (U \times V \times \KK)= \{ (0,0,0) \}$. Hence $\cL'$ is discrete. As to relative denseness, let $K \subseteq G$ and $W \subseteq H$ be compact sets so that $\cL+(K \times W) = G \times H$.  Then $\cL'+(K \times W \times \KK)= G\times H \times \KK$. Turning to the projection assumptions, injectivity of $\pi^G|_{\cL'}$ readily follows from injectivity of $\pi^G|_\cL$. Finally, $\pi^{H'}(\cL')$ is dense in $H'$ by Lemma~\ref{lem4}. This proves (a).

\noindent ``(b)'' is obvious as $\Lambda'_{W'}=\pi^G(\cL' \cap(G\times W\times \KK))=\pi^G(\cL \cap(G\times W))=\Lambda_W$.

\noindent ``(c)'' follows immediately from the fact that the construction in Lemma~\ref{lem2} produces a compact and metrisable $\KK$ .
\end{proof}

\medskip

\begin{proof}[Proof of Theorem~\ref{thm:mainintro}]
Let $\Lambda \subseteq \RR^d$ be any shifted almost model set. Then $\Lambda$ is an almost model set by Theorem~\ref{thm:shift}, and hence it is an almost model set in a cut-and-project scheme with second countable internal space by Proposition~\ref{mainsecond}. Applying Theorem~\ref{thm2}, we get that there exist a cut-and-project scheme $(\RR^d, H' , \cL')$ with second countable $H'$ and injective $\star$-map, and $U' \subseteq K'$ such that $U'$ is non-empty open, $K'$ is compact and $\Lambda_{U'} \subseteq \Lambda \subseteq \Lambda_{K'}$. Measurability and properties $(T)$ and $(M)$ are inherited due to Remark~\ref{rem:preserved}.
Furthermore, if $\Lambda$ is an almost regular model set, we also have $m_{H'}(K' \backslash U')=0$.
Defining $W= U' \cup (\Lambda^\star)$ exactly as in Proposition~\ref{lem:imschar}, we get $\Lambda_W=\Lambda$, and the rest of the claim follows. The last statement follows from the observation that any inter (regular) model set is an almost (regular) model set by  Proposition~\ref{prop:interprojection}.
\end{proof}

\section*{Acknowledgements}
NS was supported by the Natural Sciences and Engineering Council of Canada (NSERC) via grant 2020-00038, and he is grateful for the support.
Part of this work was done during the "Summer School on Aperiodic Order" at MacEwan University which was sponsored by the Research Office, the Dean Office and the Mathematics and Statistics Department at MacEwan University, as well as Collaborative Research Centre 1283 of the DFG at the Bielefeld University, and the authors are grateful for the support.


\begin{thebibliography}{99}

\bibitem{AACM19}
\newblock  E.~Agora, J.~Antezana, C.~Cabrelli and B.~Matei,
\newblock \textit{Existence of quasicrystals and universal stable sampling and interpolation in LCA groups},
\newblock Trans.~Amer.~Math.~Soc. \emph{372}, 4647--4674, 2019.

\bibitem{TAO}
\newblock M.~Baake and U.~Grimm,
\newblock \textit{Aperiodic Order. Vol. 1. A Mathematical Invitation},
\newblock Encyclopedia of Mathematics and its Applications \emph{149},
\newblock Cambridge University Press, Cambridge, 2013.

\bibitem{TAO2}
M.~Baake and U.~Grimm (eds.), \textit{Aperiodic Order. Vol.~2:  Crystallography and Almost Periodicity},
\newblock Encyclopedia of Mathematics and its Applications \emph{166},
\newblock Cambridge University Press, Cambridge, 2017.

\bibitem{BG}
\newblock M.~Baake and U.~Grimm,
\newblock \textit{Fourier transform of Rauzy fractals and point spectrum of 1D Pisot inflation tilings},
\newblock Docum. Math. \emph{25}, 2303--2337, 2020.
\texttt{arxiv:1907.11012}.

\bibitem{BL}
\newblock M.~Baake and D.~Lenz,
\newblock \textit{Dynamical systems on translation bounded measures: pure point dynamical and diffraction spectra},
\newblock Ergod. Th. \& Dynam. Sys. \emph{24}, 1867--1893, 2004.

\bibitem{BLM}
M. ~Baake, D. Lenz and R.V. Moody, \textit{A characterisation of model
sets via dynamical systems}, Ergod. Th. \& Dynam. Sys. \emph{27},
341--382, 2007. \texttt{arXiv:math.DS/0511648}.

 \bibitem{BOU}
 N.~Bourbaki, \textit{Elements of Mathematics: General Topology}, Chapters 1--4,
 reprint, Springer, Berlin, 1989.

\bibitem{HuckRichard15}
C.~Huck and C.~Richard,
\newblock \textit{On pattern entropy of weak model sets},
\newblock Discrete Comput.~Geom.~\emph{54}, 741--757, 2015.

\bibitem{KR18}
\newblock G.~Keller and C.~Richard,
\newblock \textit{Dynamics on the graph of the torus parametrization},
\newblock  Ergod. Th. \& Dynam. Sys. \emph{38}, 1048--1085, 2018.

\bibitem{KR19}
\newblock G.~Keller and C.~Richard,
\newblock \textit{Periods and factors of weak model sets},
\newblock Israel J. Math. \emph{229}, 85--132, 2019.

\bibitem{KS}
\newblock A. Klick and N. Strungaru,
\newblock \textit{On higher dimensional arithmetic progressions in Meyer sets},
\newblock J.~Austral.~Math.~Soc.,in press, 2021.


\bibitem{LAL20}
\newblock D.-i.~Lee, S.~Akiyama and J.-Y.~Lee,
\newblock \textit{Pure discrete spectrum and regular model sets in unimodular substitution tilings in $\RR^d$},
\newblock Acta Cryst. \emph{A 76}, 600--610, 2020.
\texttt{arXiv:2007.11242}.

\bibitem{JYL07}
\newblock J.-Y.~Lee,
\newblock \textit{Substitution Delone sets with pure point spectrum are inter model sets},
\newblock  J. Geom. Phys. \emph{57}, 2263--2285, 2007.


\bibitem{LLRSS}
J.-Y.\ Lee, D. Lenz, C. Richard, B. Sing and N. Strungaru,  \textit{Modulated crystals and almost periodic measures},
Lett. Math. Phys. \emph{110}, 3435--3472, 2020.
\texttt{arXiv:1907.07017}.

\bibitem{LM06}
\newblock J.-Y.~Lee and R.V.~Moody,
\newblock \textit{A characterization of model multi-colour sets},
\newblock Ann. Henri Poincar\'e \emph{7},125--143, 2006.

\bibitem{L09}
\newblock D.~Lenz,
\newblock \textit{Continuity of eigenfunctions of uniquely ergodic dynamical systems and intensity of Bragg peaks},
\newblock Commun. Math. Phys. \emph{287}, 225--258, 2009.

\bibitem{LSS20}
\newblock D.~Lenz, T.~Spindeler and N.~Strungaru,
\newblock \textit{Pure point diffraction and mean, Besicovitch and Weyl almost periodicity},
\newblock preprint, 2020. \texttt{arXiv:2006.10821}.

\bibitem{RVM3}
\newblock R.V.~Moody,
\newblock \textit{Meyer sets and their duals}. In: {\it The Mathematics of Long-Range
Aperiodic Order}, (R. V. Moody, ed.), NATO ASI Series \emph{C 489},
Kluwer, Dordrecht, pp. 403--441, 1997.


\bibitem{RS00}
\newblock H.~Reiter and J.D.~ Stegeman,
\newblock \textit{Classical harmonic analysis and locally compact groups},
\newblock Second edition, London Mathematical Society Monographs, New Series \emph{22},
\newblock  The Clarendon Press, Oxford University Press, New York, 2000.

\bibitem{R62}
\newblock W.~Rudin,
\newblock \textit{Fourier analysis on groups},
\newblock Interscience Tracts in Pure and Applied Mathematics, No. 12,
\newblock John Wiley \& Sons Interscience Publishers, New York-London, 1962.

\bibitem{sch00}
M.~Schlottmann, \textit{Generalized model sets and dynamical systems}. In: \textit{ Directions in
Mathematical Quasicrystals}, eds. M. Baake, R.V. Moody, CRM Monogr. Ser., AMS, Providence,
RI, pp.\ 143--159, 2000.


\bibitem{NS17}
N.~Strungaru, \textit{
Almost Periodic Pure Point Measures}. In:  \cite{TAO2}, pp. 271--342, 2017.
\texttt{arXiv:1501.00945}.




\end{thebibliography}
\end{document}